\newcommand{\fra}[2]{\textstyle{\frac{#1}{#2}}}
\newcommand{\beqn}{\begin{eqnarray}\begin{aligned}}
\newcommand{\eqn}{\end{aligned}\end{eqnarray}}
\newtheorem{thm}{Theorem}[section]
\newtheorem{lem}[thm]{Lemma}
\begin{document}

\begin{titlepage}

\begin{center}

{\Large {\bf The algebra of the general Markov model on phylogenetic trees and networks}}

\vspace{2em}

J. G. Sumner, B. H. Holland$^{\ast}$, and P. D. Jarvis$^{\dagger}$  
\par \vskip 1em \noindent
{\it School of Mathematics and Physics, University of Tasmania, Australia}
\end{center}
\par \vskip .3in \noindent

\vspace{1cm} \noindent\textbf{Abstract} \normalfont 
\\\noindent
It is known that the Kimura 3ST model of  sequence evolution on phylogenetic trees can be
extended quite naturally to arbitrary split systems. 
However, this extension relies heavily on mathematical peculiarities of the K3ST model, and providing an analogous augmentation of the general Markov model has thus far been elusive.
In this paper we rectify this shortcoming by showing how to extend the general Markov model on trees to to include arbitrary
splits; and even further to more general network models.  
This is achieved by exploring the algebra of the generators of the continuous-time Markov chain together with the ``splitting'' operator that
generates the branching process on phylogenetic trees. 
For simplicity we proceed by discussing the two state case and note that our results are easily extended to more states with little complication.
Intriguingly, upon restriction of the two state general Markov model to the parameter space of the binary symmetric model, our extension is indistinguishable from the previous approach only on trees; as soon as any incompatible splits are introduced the two approaches give rise to differing probability distributions with disparate structure. 
Through exploration of a simple example, we give a tentative argument that our approach to extending to more general networks has desirable properties that the previous approaches do not share.
In particular, our construction allows for the possibility of convergent evolution of previously divergent lineages; a property that is of significant interest for biological applications.
\vfill
\hrule \mbox{} \\
{\footnotesize{
$^{\ast}$ ARC Future Fellow\\
$^{\dagger}$ Alexander von Humboldt Fellow\\
\textit{keywords:} split system, Markov process, maximum likelihood\\
\textit{email:} jsumner@utas.edu.au
}}
\end{titlepage}


%
\section{Introduction}\label{sec:intro}
Phylogenetic methods seek to infer the prior evolutionary relationships of extant taxa.
Classically, this was achieved by comparing morphological features, but modern methods focus on molecular data such as DNA.
Harking back to sketches in Darwin's early notebooks, it has also been assumed that evolutionary history resembles a tree structure.
However, it is now well known that evolutionary processes such as hybridisation, deep coalescence (incomplete lineage-sorting), horizontal gene transfer and recombination cannot be accurately modelled as a tree. 
Even when the underlying historical signal fits a tree, there may be conflicting non-historical signals caused by sampling error, long-branch attraction, nucleotide composition bias, or changes in the substitution rate at individual sites across the tree, as well as alignment or misreading errors.  
Incorrect or over-parameterized models of sequence mutations can lead to high statistical support for splits that are incompatible with a single tree.

It is clear that imposing a strictly treelike evolutionary history may be inappropriate in the situations described above, hence methods that can assist in identifying and understanding conflict in phylogenetic data are essential. 
One class of methods that have proved useful in this respect are weighted split-systems and their corresponding visualization as networks. 
Split networks initially arose as means of visualizing the split decomposition of a distance metric as defined by \citet{bandelt1992}. 
In that work, the authors gave a decomposition that provides a way of assessing whether the structure of a distance matrix is treelike or if it contains other conflicting signals. 
However, the idea of a weighted split-system is very general and has arisen in many phylogenetic contexts. 
These include (1) median networks \citep{bandelt1994}, where splits and their weights are derived from a binary coding of a sequence alignment; (2) Hadamard (or spectral) analysis \citep{hendy1989}, which defines an invertible relationship between site patterns and a split spectrum under certain simple models (K3ST and subclasses); (3) Neighbor-Net \citep{bryant2004}, a distance-based method which applies a greedy agglomerative algorithm to find a circular ordering of taxa and then a least-squares approach to find weights for the corresponding set of circular splits; (4) Consensus networks \citep{holland2004}, which take a set of trees and define a weighted split-system based on the number of trees that display a particular edge, possibly incorporating edge weight information \citep{holland2006}.

With the exception of spectral analysis \cite{hendy1989} these methods are all combinatorial and/or distance-based; there is currently no way to infer a weighted split-system in the likelihood setting using general Markov models of sequence evolution. 
That said, there has been some previous work on calculating likelihood scores for particular phylogenetic networks under special models. 
\citet{haeseler1993} developed a framework for computing the likelihood of a split-system for binary sequences under a Cavender-Farris model with invariable sites. 
\citet{strimmer2000} developed a Bayesian approach that calculated the likelihood of a given directed acyclic graph (DAG) for more complex models of sequence evolution. 
More recently \citet{jin2006} defined a likelihood score for phylogenetic networks as a weighted mixture of tree likelihoods. 
All of these methods begin with a given phylogenetic network and then attempt to calculate its likelihood under some model. 
This is very different from the Hadamard based approach -- and the new approach we explore here -- which begin with the data and a model and infer a weighted split-system.

Given their importance, there is a distinct lack of an extension of the standard Markov models on trees to arbitrary split systems.
In recent work, \citet{bryant2005c,bryant2009} has re-examined the nature of the Kimura 3ST and binary-symmetric model as, under a simple extension, these models permit the inclusion of arbitrary splits over and above those that come from a single tree. 
However, these so called ``group-based'' models of sequence evolution are not motivated by biological considerations and hence their validity in applied studies must be scrutinized carefully. 
The primary motivation for these models is mathematical elegance and simplicity, so that it is not necessarily the case that the underlying assumptions have biological relevance.
Specifically, all group-based models have doubly-stochastic rate matrices and thus uniform stationary distributions.
This is clearly inappropriate given that varying GC content is known to be of crucial importance in phylogenetics \citep{jermiin2004}.
It appears that to date it has not been possible to employ general model-based methods to infer split networks from phylogenetic data sets.

In this article we show how Markov models of phylogenetic evolution on trees (thought of as \emph{compatible} split systems) can be generalized to the case of \emph{arbitrary} split systems.
In the binary case of two character states, we achieve this by studying the algebra of the generators of the continuous time Markov chain together with the ``splitting'' operator that generates the branching process on phylogenetic trees.
The resulting presentation of the general Markov rate matrix model on a tree is such that it can be generalized in a natural way to include arbitrary splits; including those that are incompatible with any tree.
This results in a very general model that contains the standard tree model as a special case, but has the potential to associate an individual weight and rate matrix to \emph{any} additional splits that we wish to include.
Additionally, we show by example that our approach gives rise to the possibility of Markov models on much more general networks, with phylogenetic evolution proceeding in a series of ``epochs'' consisting of divergence or \emph{convergence} of arbitrary groups of taxa (ie. lineages).
As part of the discussion, we note that our results are fully generalizable to any number of character states with complication of detail only.
Intriguingly, we will also show that under a restriction of the parameter space to the binary-symmetric case that this model is \textit{not} consistent with the Hadamard based approach given by \citet{bryant2009}.
We close with a simple example that shows our construction has the ability to model convergent evolution of lineages; a property that is simply not available to the Hadamard based approach.

\section{Preliminaries}\label{sec:prelim}

In this article, a \emph{tree} with vertex set $X$ is an acyclic graph with vertices chosen from $X$ such that all vertices have valence of exactly 3 or 1. 
A \emph{rooted tree} is an acyclic graph as above but with a single vertex $\rho$ (the root) having valence 2.
Thus, a rooted tree is a collection of \emph{edges} $e\in \binom{X}{2}$, and can be made to be a \emph{directed} graph by considering $e=(u,v)$ as an ordered pair of adjacent vertices, where $u$ lies on the path from $v$ to $\rho$.
Vertices of valence 1 are referred to as \emph{leaves} and we label the leaves from elements of the set $\left[n\right]:=\{1,2,\ldots,n\}$.
For a rooted tree, we label each non-leaf vertex $v$ by the subset of leaves such that the path from each of these leaves to $\rho$ contains $v$. 
Additionally, we label each edge $e=(u,v)$ by the subset that labels the vertex $v$.
In this way, the edges are labelled by subsets of $\left[n\right]$, with pendant edges labelled by singletons.
See Figure~\ref{fig:tree} for a graphical representation of a rooted tree.

\begin{figure}[t]
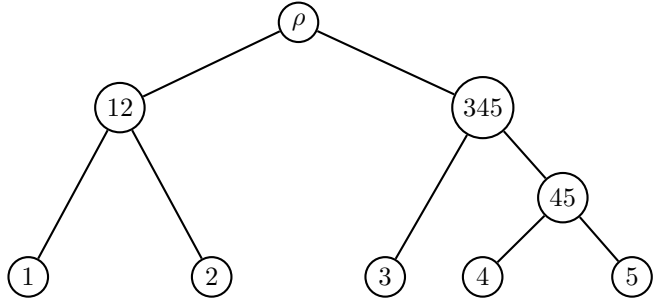

  \centering  
  $\psmatrix[colsep=.3cm,rowsep=.4cm,mnode=circle]
  	&&&&&& \rho \\
   && 12 &&   && &&  && 345 \\ 
   &&  &&   && &&  &&  & 45 \\
  1 && &&  2 && && 3 && 4 && 5 
  \ncline{1,7}{2,3}
  \ncline{2,3}{4,1}
  \ncline{2,3}{4,5}
  \ncline{1,7}{2,11}
  \ncline{2,11}{3,12}
  \ncline{2,11}{4,9}
  \ncline{3,12}{4,13}
  \ncline{3,12}{4,11}
  \endpsmatrix
  $
  \caption{A rooted tree.}
  \label{fig:tree}
\end{figure}

Consider the vector space $V\cong \mathbb{C}^2$ with the ordered basis\footnote{We use ``Dirac notation'', where a vector is represented by a ``ket'' $\Ket{\mbox{}}$, as this notation is particularly elegant when it comes to more general phylogenetic character patterns.}
\beqn
\left\{\Ket{0}\equiv e_0:=\left(
\begin{array}{r}
	1\\
	0\\
\end{array}\right),
\hspace{1em}
\Ket{1}\equiv e_1:=\left(
\begin{array}{r}
	0\\
	1\\
\end{array}\right)\right\}.\nonumber
\eqn
With respect to this basis, we can define ``Markov generators'' as the zero column-sum matrices
\beqn
L_{\alpha}=\left(
\begin{array}{rr}
	-1 & 0 \\
	1  & 0 \\
\end{array}\right),
\qquad
L_{\beta}=\left(
\begin{array}{rr}
	0 & 1 \\
	0  & -1 \\
\end{array}\right).\nonumber
\eqn
In this way, the general rate matrix for a continuous-time Markov chain on two states can be expressed as the linear combination\footnote{An amusing aside: Our rate matrices have zero \emph{column-} rather than \emph{row}-sum, as we like to conform with physicists' notation of right matrix multiplication. This sits well psychologically with the physical picture that a linear operator ``hits'' a vector and it ``moves'', which, in turn, is in tune with the left-to-right direction that one reads printed English.}
\beqn\label{eq:ratematrix}
Q=\alpha L_{\alpha}+\beta L_{\beta}=\left(
\begin{array}{rr}
	-\alpha & \beta \\
	\alpha & -\beta  \\
\end{array}\right).
\eqn
The associated transition matrix $\left[M(t)\right]_{ij}$, representing the probability of a transistion $j\rightarrow i$ at time $t$, is then given by the exponential map
\beqn
M(t)=\exp\left[{Qt}\right]:=\sum_{n=0}^\infty \frac{Q^nt^n}{n!}.\nonumber
\eqn

By noting that the generators satisfy the relations 
\beqn\label{eq:relations}
L_{\alpha}^2=L_{\beta}L_{\alpha}=-L_{\alpha},\qquad L_{\beta}^2=L_{\alpha}L_{\beta}=-L_{\beta},
\eqn
it is easy to show that 
\beqn
\left(\alpha L_{\alpha}+\beta L_{\beta}\right)^n=(-1)^{n-1}(\alpha+\beta)^{n-1}\left(\alpha L_{\alpha}+\beta L_{\beta}\right).\nonumber
\eqn
Thus
\beqn\label{eq:transmatrix}
M(t)=\exp\left[{Qt}\right]=\sum_{n=0}^\infty \frac{Q^nt^n}{n!}&=1-\frac{1}{\alpha+\beta}\left(e^{-(\alpha+\beta)t}-1\right)\left(\alpha L_{\alpha}+\beta L_{\beta}\right)\\
&=1-\frac{1}{\alpha+\beta}\left(e^{-(\alpha+\beta)t}-1\right)Q.
\eqn

As $M(t)$ is invariant under the reparameterization\footnote{For further discussion of \emph{local} time-reparameterization in phylogenetics see \citet{jarvis2005} or, in the context of a changing rate of mutation, see \citet{penny2005}}
\beqn
t &\rightarrow t'=\lambda t,\quad
\alpha &\rightarrow \alpha'=\lambda^{-1}\alpha,\quad
\beta &\rightarrow \beta'=\lambda^{-1}\beta,\nonumber 
\eqn
we see that we can ``scale out'' $t$ by choosing $\lambda\!=\!t^{-1}$.
As, in a practical context, $\alpha$, $\beta$ and even $t$ are unknown parameters that must be inferred from observed data using some statistical estimation procedure, we see that we can take
\beqn
M(\alpha,\beta)=e^{Q}=e^{\left(\alpha L_\alpha+\beta L_\beta\right)},\nonumber
\eqn
as completely equivalent to (\ref{eq:transmatrix}).  
If we think of $M(\alpha,\beta)$ as a two-dimensional manifold (in the sense of a Lie group \citep{procesi2007}), then we see that the Markov generators are none other than the basis vectors of the tangent space at the identity:
\beqn
L_{\alpha}\equiv \left.\frac{\partial}{\partial \alpha}M(\alpha,\beta)\right|_{\alpha=\beta=0},\qquad L_{\beta}\equiv \left.\frac{\partial}{\partial \beta}M(\alpha,\beta)\right|_{\alpha=\beta=0},\nonumber
\eqn
with algebraic closure across the ``Lie bracket'' $\left[L_\alpha,L_\beta\right]:=L_\alpha L_\beta-L_\beta L_\alpha=L_\alpha-L_\beta$ ensuring ``closure'' of the corresponding Markov model (as is discussed in \cite{jarvis2010}).
This connection between continuous time Markov chains and Lie groups is an important one and seems to have been first noted by \citet{johnson1985}.
This point of view is needed in order to extend the results of the present article to the case of character state spaces of arbitrary size.
Having given this perspective into the meaning of the Markov generators, we from will nevertheless take the more usual representation (\ref{eq:transmatrix}) of transition matrices in all that follows below.

In \citet{sumner2005} it was shown that the Markov models of phylogenetics in standard use can be represented in an abstract setting using the tensor product space $V\otimes V\otimes \ldots \otimes V$, where $\dim(V)\!=\!k$ is the number of character states and the number of copies of $V$ is equal to the number of taxa under consideration.
In these models, it is usual to impose conditional independence across the branches of the tree and this can be formalized using a linear operator $\delta: V\rightarrow V\otimes V$ to generate speciation events. 
This is referred to as the ``splitting operator'' and is defined, using our chosen basis, as
\beqn
\delta\cdot \Ket{i} = \Ket{i}\otimes \Ket{i},\nonumber
\eqn
which, expressed at the level of the probability distributions, corresponds exactly to the duplication of a sequence of molecular units.
That is, if we select state $i$ from the initial sequence with probability $p_i$, then immediately after duplication -- and assuming the sequences remain aligned -- the probability of observing the pattern $ij$ at a given site is $p_{i}\delta_{ij}$.
By defining the vector $p:=\sum_{i}p_i\Ket{i}\in V$ and noting that the tensors $\left\{\Ket{ij}:=\Ket{i}\otimes \Ket{j}\right\}_{0\leq i,j \leq 1}$ form a basis for $V\otimes V$, the splitting operator achieves this notion of speciation in the abstract setting:
\beqn
\delta\cdot p = \delta \cdot \left(\sum_{i}p_i\Ket{i}\right) = \sum_{i} p_i \delta \cdot \Ket{i} = \sum_{i} p_i \Ket{ii} = \sum_{i,j} p_{i}\delta_{ij}\Ket{ij},\nonumber
\eqn
where $P:=\delta\cdot p\in V\otimes V$ has components $p_{i}\delta_{ij}$ and is referred to as a ``phylogenetic tensor''.
Subsequently, given two rate matrices $Q_1$ and $Q_2$, and two edge weights $\tau_1$ and $\tau_2$, the phylogenetic tensor evolves to
\beqn
P'=e^{Q_1\tau_1}\otimes e^{Q_2\tau_2}\cdot P,\nonumber
\eqn
which, up to first order terms in the edge weights, is
\beqn
P'&=\left[\left(1+\alpha_1\tau_1L_{\alpha}+\beta_1\tau_1L_{\beta}+\ldots\right)\otimes \left(1+\alpha_2\tau_2L_{\alpha}+\beta_2\tau_2L_{\beta}+\ldots\right)\right]\cdot P\nonumber\\
&=\left[1+\alpha_1\tau_1L_{\alpha}\otimes 1+\alpha_2\tau_2 1\otimes L_{\alpha} + \beta_1\tau_1L_{\beta}\otimes 1+\beta_2\tau_2 1\otimes L_{\beta}+\ldots  \right]\cdot P.
\eqn
In this way, the splitting operator can be thought of as the generator of the branching pattern of the phylogenetic tree, while $L_{\alpha}$ and $L_\beta$ are the generators of the Markov process.
(For more details of this formalism see \citet{bashford2004,sumner2005} and \citet{sumner2008}, and for an even more general setting see \citet{jarvis2005}).
Presently, we are concerned with the algebra resulting from application of these two types of generators.

\section{Some helpful lemmas}

The action of the Markov generators on the basis vectors is
\beqn\label{eq:basisaction}
\begin{array}{ll}
L_\alpha \Ket{0}=\Ket{1}-\Ket{0}, \hspace{1em}&\hspace{1em} L_\beta \Ket{0}=0, \\
L_\alpha \Ket{1}=0; \hspace{1em}&\hspace{1em} L_\beta \Ket{1}=\Ket{0}-\Ket{1}.
\end{array}
\eqn
By comparing the relations 
\beqn
\delta\cdot L_{\alpha}\Ket{0}&=\Ket{11}-\Ket{00},\qquad
\delta\cdot L_{\alpha}\Ket{1}&=0;\nonumber
\eqn
and
\beqn
L_{\alpha}\otimes L_{\alpha} \Ket{00}&=\Ket{11}-\Ket{01}-\Ket{10}+\Ket{00},\\
L_{\alpha}\otimes L_{\alpha} \Ket{11}&=0,\\
L_{\alpha}\otimes 1 \Ket{00}&=\Ket{10}-\Ket{00},\\
L_{\alpha}\otimes 1 \Ket{11}&=0,\\
1\otimes L_{\alpha} \Ket{00}&=\Ket{01}-\Ket{00},\\
1\otimes L_{\alpha} \Ket{11}&=0;\nonumber
\eqn
with similar for $L_{\beta}$, we find that the Markov generators can be ``pushed through'' past the splitting operator:
\begin{lem}\label{lem:intertwining}
As operators from $V$ to $V\otimes V$, we have
\beqn
\delta\cdot L_{\alpha}&=\left(L_{\alpha}\otimes L_{\alpha}+L_{\alpha}\otimes 1+1\otimes L_{\alpha}\right)\cdot\delta,\\
\delta\cdot L_{\beta}&=\left(L_{\beta}\otimes L_{\beta}+L_{\beta}\otimes 1+1\otimes L_{\beta}\right)\cdot\delta.\nonumber
\eqn
\end{lem}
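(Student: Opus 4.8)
The plan is to verify both operator identities by evaluating the two sides on the basis vectors $\Ket{0},\Ket{1}$ of $V$; since $\delta$, $L_\alpha$, $L_\beta$ and every tensor-product operator appearing are linear, agreement on a basis of $V$ is agreement as operators $V\to V\otimes V$. For the first identity I would apply the left-hand side to $\Ket{0}$, using (\ref{eq:basisaction}) together with $\delta\cdot\Ket{i}=\Ket{ii}$, to get $\delta\cdot L_\alpha\Ket{0}=\Ket{11}-\Ket{00}$; then apply the right-hand side to $\Ket{0}$ by first forming $\delta\cdot\Ket{0}=\Ket{00}$ and summing the three actions of $L_\alpha\otimes L_\alpha$, $L_\alpha\otimes 1$ and $1\otimes L_\alpha$ on $\Ket{00}$ displayed just above the lemma, whereupon the $\Ket{01}$ and $\Ket{10}$ contributions and the surplus $\Ket{00}$ contributions cancel in pairs and one is left with $\Ket{11}-\Ket{00}$ again. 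On $\Ket{1}$ both sides vanish: the left-hand side because $L_\alpha\Ket{1}=0$, and the right-hand side because $\delta\cdot\Ket{1}=\Ket{11}$ is annihilated by each of $L_\alpha\otimes L_\alpha$, $L_\alpha\otimes 1$ and $1\otimes L_\alpha$, again since $L_\alpha\Ket{1}=0$. The second identity then follows from the identical computation with the roles of $\Ket{0}$ and $\Ket{1}$ interchanged, using the $L_\beta$ versions of (\ref{eq:basisaction}) and of the displayed $\Ket{00}$-relations.

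I would also record a slightly more structural derivation that explains why this particular combination of operators appears. Setting $N_\gamma:=1+L_\gamma$ for $\gamma\in\{\alpha,\beta\}$ and using the algebraic identity $L_\gamma\otimes L_\gamma+L_\gamma\otimes 1+1\otimes L_\gamma=N_\gamma\otimes N_\gamma-1\otimes 1$, the asserted relation $\delta\cdot L_\gamma=(N_\gamma\otimes N_\gamma-1\otimes 1)\cdot\delta$ is equivalent, after adding $\delta$ to both sides, to $\delta\cdot N_\gamma=(N_\gamma\otimes N_\gamma)\cdot\delta$. In the chosen basis $N_\alpha=1+L_\alpha$ has both columns equal to $e_1$ and $N_\beta=1+L_\beta$ has both columns equal to $e_0$, so each is a matrix whose every column is a standard basis vector. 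For any such matrix $A$ one has, acting on $\Ket{i}$, that $\delta\cdot A\Ket{i}=\sum_j A_{ji}\Ket{jj}$ while $(A\otimes A)\cdot\delta\Ket{i}=\sum_{j,k}A_{ji}A_{ki}\Ket{jk}$, and these agree precisely because the column $(A_{ji})_j$ is $0$--$1$ valued with a single nonzero entry; taking $A=N_\alpha$ and $A=N_\beta$ finishes the proof.

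I do not expect a genuine obstacle here: the whole content is the pair of elementary basis computations, and the only place where care is needed is the bookkeeping of the cross terms $\pm\Ket{01}$, $\pm\Ket{10}$ and the repeated $\Ket{00}$ when the three tensor operators are summed — which is exactly what the chain of displayed relations preceding the lemma is arranged to make transparent. The reformulation via $N_\gamma$ is not logically needed in the two-state case, but it is the cleaner form to keep in mind for the promised extension to more than two character states.
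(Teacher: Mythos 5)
Your first paragraph reproduces the paper's own argument: the lemma is established there precisely by the chain of displayed basis computations immediately preceding it (the values of $\delta\cdot L_\alpha$ on $\Ket{0},\Ket{1}$ versus the values of $L_\alpha\otimes L_\alpha$, $L_\alpha\otimes 1$ and $1\otimes L_\alpha$ on $\Ket{00},\Ket{11}$), with linearity reducing the operator identity to agreement on a basis; your bookkeeping of the cancelling $\pm\Ket{01}$, $\pm\Ket{10}$ and $\Ket{00}$ terms is correct. Your second derivation is correct as well and is \emph{not} in the paper, and it is worth recording: writing $N_\gamma:=1+L_\gamma$ and noting $L_\gamma\otimes L_\gamma+L_\gamma\otimes 1+1\otimes L_\gamma=N_\gamma\otimes N_\gamma-1\otimes 1$ converts the claim into $\delta\cdot N_\gamma=(N_\gamma\otimes N_\gamma)\cdot\delta$, and the verification that $\delta\cdot A=(A\otimes A)\cdot\delta$ for any matrix $A$ whose columns are standard basis vectors (as both columns of $N_\alpha$ are $e_1$ and both columns of $N_\beta$ are $e_0$) isolates the single structural fact that makes the intertwining work. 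What this buys over the paper's brute-force check is exactly what you say: it removes the case-by-case cancellation and scales directly to $k$ character states, where the analogous generators $L$ again satisfy that $1+L$ has standard-basis-vector columns, so it is the natural route to the generalization the authors defer to future work. No gaps.
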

In the terminology of group actions \citep{procesi2007}, this lemma tells us the rule for how the two operators ``intertwine''.
It is exactly this relation that we will exploit to show how to generalize from a Markov model on a tree to a model on a general split system.

Given a linear operator $X$ on $V$, we define a linear operator $X^{(i)}$ on $V\otimes V\otimes \ldots \otimes V$ as the tensor product
\beqn
X^{(i)}:=1\otimes 1\otimes \ldots \otimes X \otimes 1 \otimes \ldots \otimes 1,\nonumber
\eqn
where $X$ appears in the $i^{th}$ slot of the tensor product.
Further, for a subset $A\subseteq \left[n\right]:=\{1,2,\ldots,n\}$, we define
\beqn
X^{(A)}:=\prod_{i\in A}X^{(i)}.\nonumber
\eqn
For example, if we take $n=5$, we have
\beqn
X^{\left(25\right)}=\left(1\otimes X\otimes 1\otimes 1\otimes 1\right)\cdot \left(1\otimes 1\otimes 1\otimes 1 \otimes X\right)=1\otimes X\otimes 1\otimes 1 \otimes X.\nonumber
\eqn
Presently we will show how the interaction of $\delta$ with $L_{\alpha}$ naturally produces terms such as $L_{\alpha}^{(A)}$ (and similar for $L_{\beta}$).

\begin{lem}
As linear operators from $V$ to $V\otimes V\otimes V$,
\beqn
1\otimes \delta\cdot \delta = \delta \otimes 1\cdot \delta.\nonumber
\eqn
\end{lem}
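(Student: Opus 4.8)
The plan is to verify the identity by evaluating both composite operators on the basis $\left\{\Ket{0},\Ket{1}\right\}$ of $V$ and then invoking linearity. Recall from the definition of the splitting operator that $\delta\cdot\Ket{i}=\Ket{i}\otimes\Ket{i}$ for each $i\in\{0,1\}$. The first step is to compute the left-hand side: applying $1\otimes\delta$ to $\delta\cdot\Ket{i}=\Ket{i}\otimes\Ket{i}$ acts by $\delta$ in the second tensor slot, giving $\Ket{i}\otimes(\delta\cdot\Ket{i})=\Ket{i}\otimes\Ket{i}\otimes\Ket{i}$. The second step is the right-hand side: applying $\delta\otimes 1$ to $\Ket{i}\otimes\Ket{i}$ acts by $\delta$ in the first slot, giving $(\delta\cdot\Ket{i})\otimes\Ket{i}=\Ket{i}\otimes\Ket{i}\otimes\Ket{i}$. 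The two outputs coincide for each basis vector.

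Since $\left\{\Ket{0},\Ket{1}\right\}$ spans $V$, and both $1\otimes\delta\cdot\delta$ and $\delta\otimes 1\cdot\delta$ are linear operators from $V$ to $V\otimes V\otimes V$ that agree on this spanning set, they are equal. Equivalently and more conceptually, one can observe that $\delta$ is nothing but the linear extension of the diagonal assignment $i\mapsto(i,i)$ on the index set $\{0,1\}$, and that both of the composites in question are the linear extension of the ``triple diagonal'' $i\mapsto(i,i,i)$; these are manifestly the same map, which is the statement to be proved.

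There is no real obstacle here: the lemma is simply the coassociativity of the splitting (comultiplication) operator, and the only point requiring a modicum of care is the bookkeeping of which tensor factor $\delta$ acts in on each side — a triviality once one writes $1\otimes\delta$ and $\delta\otimes 1$ explicitly on $\Ket{ii}$. I would also remark in passing that this is exactly the property needed to make an $n$-fold iterated split well-defined independently of the order in which the pairwise splits are performed, which is implicitly used when one labels the internal structure of a rooted tree by nested subsets of $[n]$ as in Section~\ref{sec:prelim}.
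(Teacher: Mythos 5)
Your proof is correct and follows essentially the same route as the paper: evaluate both composites on the basis vectors $\Ket{i}$, observe that each yields $\Ket{iii}$, and conclude by linearity. The additional remark identifying the statement as coassociativity of the splitting operator is a nice touch but does not change the substance of the argument.
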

\begin{proof}
We have
\beqn
1\otimes \delta\cdot \delta \cdot \Ket{i}=1\otimes \delta \cdot \left(\Ket{i}\otimes \Ket{i}\right) = \Ket{i} \otimes \left(\Ket{i}\otimes \Ket{i}\right) =\Ket{i}\otimes \Ket{i}\otimes \Ket{i}:=\Ket{iii}.\nonumber
\eqn
Similarly,
\beqn
\delta\otimes 1\cdot \delta \cdot \Ket{i} = \delta\otimes 1\cdot \left(\Ket{i}\otimes \Ket{i}\right) = \left(\Ket{i}\otimes \Ket{i}\right)\otimes \Ket{i} =\Ket{i}\otimes \Ket{i}\otimes \Ket{i}=\Ket{iii} .\nonumber
\eqn
\end{proof}
Using this lemma, we can recursively define 
\beqn
\delta^{i+1}:&= \delta \otimes 1 \otimes 1 \otimes \ldots 1\cdot \delta^{i} 
\equiv 1\otimes \delta \otimes 1\otimes \ldots \otimes 1\cdot \delta^{i}
 \equiv \ldots 
 \equiv 1\otimes 1\otimes \ldots \otimes 1\otimes \delta \cdot \delta^i,\nonumber
\eqn
with $\delta^{1}:=\delta$.
The action of the operator $\delta^{n-1}$ taking $V$ to $V\otimes V\otimes \ldots \otimes V$ generates exactly the ``$n$-taxon process'' as defined in \citet{bryant2009}, which, in turn, is completely equivalent to the formalism given in \citet{bashford2004}. 

If we note that $\delta \cdot 1=1\otimes 1\cdot \delta$ and consider Lemma~\ref{lem:intertwining}, we see that, for $x=\alpha$ or $\beta$, we have
\beqn
\delta^2\cdot L_{x}:&=\delta\otimes 1\cdot \delta \cdot L_{x}\\
&=\delta\otimes 1\cdot \left(L_{x}\otimes L_{x}+1\otimes L_{x}+L_{x}\otimes 1\right)\cdot \delta\\
&=\left[\left(L_{x}\otimes L_{x}\otimes L_{x}+L_{x}\otimes 1\otimes L_{x}+1\otimes L_{x}\otimes L_{x}\right)+\right.\\
&\hspace{5em}\left.1\otimes 1\otimes L_{x}+\left(L_{x}\otimes L_{x}\otimes 1+L_{x}\otimes 1\otimes 1+1\otimes L_{x}\otimes 1\right)\right]\cdot \delta^2\\
&=\left(\sum_{A\subseteq \{1,2,3\},A\neq \emptyset}L_x^{(A)}\right)\cdot \delta^2.\nonumber
\eqn
Generalizing this result we have:
\begin{lem}\label{lem:higherintertwine}
\beqn
\delta^{n-1}\cdot L_{x}=\left(\sum_{A\subseteq \left[n\right],A\neq \emptyset}L_{x}^{(A)}\right)\cdot \delta^{n-1}.
\eqn
\end{lem}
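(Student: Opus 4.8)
The plan is to prove Lemma~\ref{lem:higherintertwine} by induction on $n\geq 2$, using Lemma~\ref{lem:intertwining} as the engine and the preceding (unnamed) lemma $1\otimes\delta\cdot\delta=\delta\otimes 1\cdot\delta$ to move the splitting operator around freely. For the base case $n=2$ the claim reads $\delta\cdot L_x=\big(L_x\otimes L_x+L_x\otimes 1+1\otimes L_x\big)\cdot\delta=\big(\sum_{\emptyset\neq A\subseteq[2]}L_x^{(A)}\big)\cdot\delta$, which is precisely Lemma~\ref{lem:intertwining}, so there is nothing to do. The inductive step is just a tidied-up, general-$n$ version of the $\delta^2\cdot L_x$ computation already displayed before the statement.

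For the step, assume $\delta^{n-2}\cdot L_x=\big(\sum_{\emptyset\neq A\subseteq[n-1]}L_x^{(A)}\big)\cdot\delta^{n-2}$ as operators $V\to V^{\otimes(n-1)}$. Since the associativity lemma lets $\delta^{n-1}$ be obtained from $\delta^{n-2}$ by splitting \emph{any} single slot, I would pick the last one and write $\delta^{n-1}=\big(1^{\otimes(n-2)}\otimes\delta\big)\cdot\delta^{n-2}$, giving
\[
\delta^{n-1}\cdot L_x=\big(1^{\otimes(n-2)}\otimes\delta\big)\cdot\Big(\sum_{\emptyset\neq A\subseteq[n-1]}L_x^{(A)}\Big)\cdot\delta^{n-2}.
\]
The heart of the argument is commuting $1^{\otimes(n-2)}\otimes\delta$ past each $L_x^{(A)}$, split according to whether $n-1\in A$. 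If $n-1\notin A$ then $L_x^{(A)}$ acts only in the first $n-2$ slots, commutes with the $\delta$ sitting in the last slot, and contributes $L_x^{(A)}\cdot\big(1^{\otimes(n-2)}\otimes\delta\big)$ with $A$ now read inside $[n]$. If $n-1\in A$, write $A=B\cup\{n-1\}$ with $B\subseteq[n-2]$; then $L_x^{(B)}$ commutes through and on the last slot Lemma~\ref{lem:intertwining} converts $\delta\cdot L_x$ into $\big(L_x\otimes L_x+L_x\otimes 1+1\otimes L_x\big)\cdot\delta$, so after re-tensoring with $1^{\otimes(n-2)}$ this term contributes $\big(L_x^{(B\cup\{n-1,n\})}+L_x^{(B\cup\{n-1\})}+L_x^{(B\cup\{n\})}\big)\cdot\big(1^{\otimes(n-2)}\otimes\delta\big)$.

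Finally I would reassemble the two families. The $n-1\notin A$ terms run exactly over the nonempty subsets of $[n]$ disjoint from $\{n-1,n\}$. In the $n-1\in A$ terms, as $B$ ranges over all subsets of $[n-2]$ (including $\emptyset$), the three sets $B\cup\{n-1,n\}$, $B\cup\{n-1\}$, $B\cup\{n\}$ enumerate, each exactly once, every subset of $[n]$ meeting $\{n-1,n\}$. Their union is precisely all nonempty subsets of $[n]$, so the bracket collapses to $\sum_{\emptyset\neq A\subseteq[n]}L_x^{(A)}$, and recombining with $\big(1^{\otimes(n-2)}\otimes\delta\big)\cdot\delta^{n-2}=\delta^{n-1}$ yields the claim. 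The only genuine work is this last re-indexing of subsets together with tracking which slots of $V^{\otimes n}$ correspond to which slots of $V^{\otimes(n-1)}$; choosing to split the last slot (legitimate by the associativity lemma) makes that correspondence the identity on all but that slot, so I expect this bookkeeping to be the main — and fairly mild — obstacle.
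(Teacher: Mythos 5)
Your proof is correct and is essentially the paper's own argument: induction with Lemma~\ref{lem:intertwining} applied to the slot being split, followed by a re-indexing of the nonempty subsets of $\left[n\right]$ according to whether they meet the newly created pair of slots. The only (immaterial) differences are that you split the last tensor factor where the paper splits the first, and you start the induction at $n=2$ rather than $n=3$.
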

\begin{proof}
The proof is by induction. We have shown that the result is true for $n=3$. Assuming that it is true for some $n>3$, we have
\beqn
\delta^{n}\cdot L_{x} &=\left(\delta\otimes 1\otimes\ldots \otimes 1 \right)\cdot \left(\sum_{A\subseteq\left[n\right],A\neq \emptyset}L_{x}^{(A)}\right)\cdot\delta^{n-1}\nonumber\\
&=\left(\delta\otimes 1\otimes\ldots \otimes 1 \right)\cdot\left(\sum_{A\subseteq\left[n-1\right],A\neq \emptyset}1\otimes L_{x}^{(A)}+\sum_{A\subseteq\left[n-1\right]}L_{x}\otimes L_{x}^{(A)}\right)\cdot\delta^{n-1} \\
&= \left(\sum_{A\subseteq\left[n-1\right],A\neq \emptyset} 1\otimes 1\otimes L_{x}^{(A)} +\right.\\ &\hspace{5em}\left.\sum_{A\subseteq\left[n-1\right]}\left(L_{x}\otimes L_{x}\otimes L_{x}^{(A)}+L_{x}\otimes 1\otimes L_{x}^{(A)}+1\otimes  L_{x}\otimes L_{x}^{(A)} \right) \right)\cdot\delta^{n} \\
&=\left(\sum_{A\subseteq \left[n+1\right],A\neq \emptyset}L_{x}^{(A)}\right)\cdot\delta^{n}.\nonumber
\eqn
\end{proof}

For the two state model, recall that for $n\geq 1$ we have
\beqn
Q^n=\left(\alpha L_{\alpha}+\beta L_{\beta}\right)^n=(-1)^{n-1}(\alpha+\beta)^{n-1}\left(\alpha L_{\alpha}+\beta L_{\beta}\right).\nonumber
\eqn
If we define
\beqn\label{def:higherorder}
\mathfrak{L}_{x}^{\left[n\right]}:=\sum_{A\subseteq \left[n\right],A\neq \emptyset}L_{x}^{(A)},
\eqn
then Lemma~\ref{lem:higherintertwine} implies that we have the intertwining
\beqn
\delta^{n-1}\cdot L_{x}&=\mathfrak{L}_{x}^{\left[n\right]}\cdot \delta^{n-1}.\nonumber
\eqn
We also note that we have the recursion
\beqn
\mathfrak{L}_{x}^{\left[n\right]}=L_{x}\otimes \mathfrak{L}_{x}^{\left[n-1\right]}+1\otimes \mathfrak{L}_{x}^{\left[n-1\right]}+L_{x}\otimes 1\otimes 1\otimes  \ldots \otimes 1,\nonumber  
\eqn
and (after a little effort) it follows by induction on $n$ that
\beqn
\left(\mathfrak{L}_{\alpha}^{\left[n\right]}\right)^2=\mathfrak{L}_{\beta}^{\left[n\right]}\mathfrak{L}_{\alpha}^{\left[n\right]}=-\mathfrak{L}_{\alpha}^{\left[n\right]},\qquad \left(\mathfrak{L}_{\beta}^{\left[n\right]}\right)^2=\mathfrak{L}_{\alpha}^{\left[n\right]}\mathfrak{L}_{\beta}^{\left[n\right]}=-\mathfrak{L}_{\beta}^{\left[n\right]}.\nonumber
\eqn
Inspection reveals that, for all $n$, $\mathfrak{L}_{\alpha}^{\left[n\right]}$ and $\mathfrak{L}_{\beta}^{\left[n\right]}$ satisfy exactly the same algebraic relations as $L_{\alpha}$ and $L_{\beta}$ that were given in (\ref{eq:relations}).
It follows immediately that, for $n\geq 1$ we have
\beqn
\left(\alpha \mathfrak{L}_{\alpha}^{\left[n\right]}+\beta \mathfrak{L}_{\beta}^{\left[n\right]}\right)^n=(-1)^{n-1}(\alpha+\beta)^{n-1}\left(\alpha \mathfrak{L}_{\alpha}^{\left[n\right]}+\beta \mathfrak{L}_{\beta}^{\left[n\right]}\right),\nonumber
\eqn
and 
\beqn
\delta^{n-1}\cdot \left(\alpha L_{\alpha}+\beta L_{\beta}\right)^n&=(-1)^{n-1}(\alpha+\beta)^{n-1}\delta^{n-1} \cdot\left(\alpha L_{\alpha}+\beta L_{\beta}\right)\nonumber\\
&=(-1)^{n-1}(\alpha+\beta)^{n-1}\left(\alpha \mathfrak{L}_{\alpha}^{\left[n\right]}+\beta \mathfrak{L}_{\beta}^{\left[n\right]}\right)\cdot \delta^{n-1}\\
&=\left(\alpha \mathfrak{L}_{\alpha}^{\left[n\right]}+\beta \mathfrak{L}_{\beta}^{\left[n\right]}\right)^n\cdot \delta^{n-1}.
\eqn
Putting this together we see that 
\begin{lem}\label{lem:intertwineExp}
\beqn
\delta^{n-1}\cdot \exp\left[\alpha L_{\alpha}+\beta L_{\beta}\right]=\exp\left[\alpha \mathfrak{L}_{\alpha}^{\left[n\right]}+\beta \mathfrak{L}_{\beta}^{\left[n\right]}\right]\cdot \delta^{n-1}.\nonumber
\eqn
\end{lem}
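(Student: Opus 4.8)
The plan is to reduce the claim to a termwise comparison of the two exponential power series. Abbreviate $Q := \alpha L_{\alpha}+\beta L_{\beta}$ and $\mathfrak{L}^{[n]} := \alpha\,\mathfrak{L}_{\alpha}^{[n]}+\beta\,\mathfrak{L}_{\beta}^{[n]}$, so that the assertion reads $\delta^{n-1}\cdot e^{Q}=e^{\mathfrak{L}^{[n]}}\cdot\delta^{n-1}$. Since $\delta^{n-1}$ is a fixed linear map from $V$ to $V^{\otimes n}$ between finite-dimensional spaces, pre-composition with it commutes with the absolutely convergent sum $\sum_{m\geq 0}Q^{m}/m!$; it therefore suffices to establish the matching of individual terms, namely $\delta^{n-1}\cdot Q^{m}=(\mathfrak{L}^{[n]})^{m}\cdot\delta^{n-1}$ for every integer $m\geq 0$.

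First I would handle $m=0$ separately: here $Q^{0}$ is the identity on $V$, $(\mathfrak{L}^{[n]})^{0}$ is the identity on $V^{\otimes n}$, and the required equality is just $\delta^{n-1}\cdot 1_{V}=1_{V^{\otimes n}}\cdot\delta^{n-1}$, which is immediate. For $m\geq 1$ I would argue by induction on $m$. The base case $m=1$ is Lemma~\ref{lem:higherintertwine}, which gives $\delta^{n-1}\cdot L_{x}=\mathfrak{L}_{x}^{[n]}\cdot\delta^{n-1}$ for $x=\alpha,\beta$; taking the $\alpha$- and $\beta$-linear combination yields $\delta^{n-1}\cdot Q=\mathfrak{L}^{[n]}\cdot\delta^{n-1}$. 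For the inductive step, factor $\delta^{n-1}\cdot Q^{m}=(\delta^{n-1}\cdot Q)\cdot Q^{m-1}=\mathfrak{L}^{[n]}\cdot(\delta^{n-1}\cdot Q^{m-1})$ and apply the inductive hypothesis to the bracketed operator. Equivalently, and avoiding the induction altogether, one can push the single factor $\delta^{n-1}$ through using Lemma~\ref{lem:higherintertwine} just once, after rewriting $Q^{m}=(-1)^{m-1}(\alpha+\beta)^{m-1}Q$ via the relations~(\ref{eq:relations}) and using that $\mathfrak{L}_{\alpha}^{[n]},\mathfrak{L}_{\beta}^{[n]}$ satisfy those same relations, so that $(\mathfrak{L}^{[n]})^{m}=(-1)^{m-1}(\alpha+\beta)^{m-1}\mathfrak{L}^{[n]}$.

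Summing over $m$ with weights $\frac{1}{m!}$ then gives $\delta^{n-1}\cdot e^{Q}=\sum_{m\geq 0}\frac{1}{m!}\,\delta^{n-1}\cdot Q^{m}=\sum_{m\geq 0}\frac{1}{m!}\,(\mathfrak{L}^{[n]})^{m}\cdot\delta^{n-1}=e^{\mathfrak{L}^{[n]}}\cdot\delta^{n-1}$, which is the statement. I do not expect a genuine obstacle here: the substantive content has already been extracted in Lemma~\ref{lem:higherintertwine} (and, for the alternative route, in the observation that $\mathfrak{L}_{\alpha}^{[n]},\mathfrak{L}_{\beta}^{[n]}$ obey~(\ref{eq:relations})). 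The only points that need a little care are the routine justification that the constant matrix $\delta^{n-1}$ may be moved inside the infinite sum, and the bookkeeping of which identity operator ($1_{V}$ versus $1_{V^{\otimes n}}$) appears, so that the $m=0$ terms of the two series are correctly identified.
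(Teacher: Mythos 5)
Your proposal is correct and follows essentially the same route as the paper: both reduce the claim to the termwise identity $\delta^{n-1}\cdot Q^{m}=\bigl(\alpha \mathfrak{L}_{\alpha}^{[n]}+\beta \mathfrak{L}_{\beta}^{[n]}\bigr)^{m}\cdot\delta^{n-1}$ (with $Q=\alpha L_{\alpha}+\beta L_{\beta}$) and derive it from Lemma~\ref{lem:higherintertwine}. The paper takes exactly your ``alternative'' route, collapsing $Q^{m}$ to $(-1)^{m-1}(\alpha+\beta)^{m-1}Q$ via the relations~(\ref{eq:relations}) and using that $\mathfrak{L}_{\alpha}^{[n]},\mathfrak{L}_{\beta}^{[n]}$ obey the same relations; your primary induction on $m$ is a marginally more economical variant that does not need those relations at all.
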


In order to put all of the above to work, we require one final lemma regarding tensor products and the exponential map:
\begin{lem}\label{lem:ExpProduct}
Given any two linear operators $X$ and $Y$, we have
\beqn
e^{X}\otimes e^{Y}=e^{X\otimes 1+1\otimes Y}.\nonumber
\eqn
\end{lem}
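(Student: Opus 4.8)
The plan is to reduce the identity to two standard facts: first, that $e^{A+B}=e^Ae^B$ whenever $A$ and $B$ commute; and second, that the exponential series interacts with the tensor product in the obvious way, $e^{X\otimes 1}=e^X\otimes 1$ and $e^{1\otimes Y}=1\otimes e^Y$.

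First I would observe that the two operators $X\otimes 1$ and $1\otimes Y$ on $V\otimes V$ commute, since
\beqn
(X\otimes 1)(1\otimes Y)=X\otimes Y=(1\otimes Y)(X\otimes 1),\nonumber
\eqn
using bilinearity of the tensor product and the mixed-product rule $(A\otimes B)(C\otimes D)=(AC)\otimes(BD)$. Hence by the commuting case of the exponential addition formula (which itself follows by multiplying the two absolutely convergent series and collecting terms via the binomial theorem, valid precisely because $X\otimes 1$ and $1\otimes Y$ commute),
\beqn
e^{X\otimes 1+1\otimes Y}=e^{X\otimes 1}\,e^{1\otimes Y}.\nonumber
\eqn

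Next I would evaluate each factor separately. Since $(X\otimes 1)^n=X^n\otimes 1$ for all $n\geq 0$ (again by the mixed-product rule, with $1^n=1$), and since the map $Z\mapsto Z\otimes 1$ is linear and continuous, term-by-term summation gives
\beqn
e^{X\otimes 1}=\sum_{n=0}^\infty\frac{X^n\otimes 1}{n!}=\left(\sum_{n=0}^\infty\frac{X^n}{n!}\right)\otimes 1=e^X\otimes 1,\nonumber
\eqn
and symmetrically $e^{1\otimes Y}=1\otimes e^Y$. Combining the last three displays and applying the mixed-product rule once more,
\beqn
e^{X\otimes 1+1\otimes Y}=(e^X\otimes 1)(1\otimes e^Y)=e^X\otimes e^Y,\nonumber
\eqn
which is the claim. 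In the finite-dimensional setting relevant here all the series converge absolutely, so no analytic subtleties arise; the only point requiring any care is the justification that $X\otimes 1$ and $1\otimes Y$ commute, which is exactly what licenses the use of the commuting exponential formula, and that is the step I would state explicitly rather than leave to the reader.
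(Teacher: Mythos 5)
Your proposal is correct and follows essentially the same route as the paper's proof: establish $e^{X\otimes 1}=e^X\otimes 1$ term-by-term, note that $X\otimes 1$ and $1\otimes Y$ commute, and invoke the exponential addition formula for commuting operators. You simply run the chain of equalities in the opposite direction and spell out the commutation check that the paper leaves as a one-line remark.
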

\begin{proof}
Consider
\beqn
e^X\otimes 1= \left(1+X+\fra{1}{2}X^{2}+\ldots\right)\otimes 1=1\otimes 1+X\otimes 1+\fra{1}{2}\left(X\otimes 1\right)^2+\ldots =e^{X\otimes 1},\nonumber
\eqn
which implies that
\beqn
e^{X}\otimes e^{Y}=e^{X}\otimes 1\cdot 1\otimes e^{Y}=e^{X\otimes 1}\cdot e^{1\otimes Y}=e^{X\otimes 1+1\otimes Y},\nonumber
\eqn
where the last identity follows because $X\otimes 1$ and $1\otimes Y$ commute.
\end{proof}

\section{Alternative presentation of Markov models on trees}

Consider the tree presented in Figure~\ref{fig:tree2}.
Suppose we are given a root distribution $\Ket{\pi}~:=\!~\!\sum_{i}\pi_i\Ket{i}$, a rate matrix $Q=\alpha L_{\alpha}+\beta L_{\beta}$, and edge weights $\tau_1,\tau_2,\tau_3,\tau_{34}$ and $\tau_{234}$.
The phylogenetic tensor corresponding to this tree can be generated as
\beqn
P&=e^{Q\tau_1}\otimes e^{Q\tau_2}\otimes e^{Q\tau_3}\otimes e^{Q\tau_4}\cdot 1\otimes 1\otimes  \delta \cdot 1\otimes 1\otimes e^{Q\tau_{34}}\cdot 1\otimes \delta \cdot 1\otimes e^{Q\tau_{234}}\cdot \delta  \cdot \Ket{\pi}.\nonumber
\eqn
If we set $P=\sum_{i,j,k,l}p_{ijkl}\Ket{ijkl}$ and interpret $p_{ijkl}$ as the probability of observing the pattern $ijkl$ at the leaves of the tree, we see that this tensor is equivalent to specifying a joint distribution in the normal way (see \citet{semple2003} for example).

\begin{figure}[tbp]
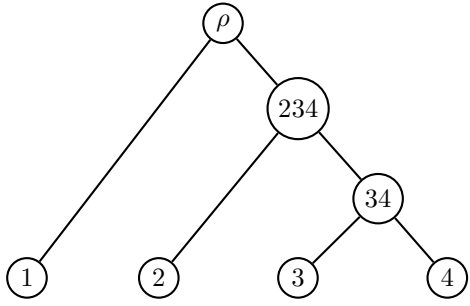

  \centering  
  $\psmatrix[colsep=.3cm,rowsep=.4cm,mnode=circle]
  && && &\rho\\
  && && && 234 \\
  && && && &34 \\
  1 && &&  2 && 3 && 4 
  \ncline{1,6}{4,1}
  \ncline{1,6}{2,7}
	\ncline{2,7}{4,5}
	\ncline{2,7}{3,8}
	\ncline{3,8}{4,9}
	\ncline{3,8}{4,7}
  \endpsmatrix
  $
  \caption{A rooted tree on four leaves.}
  \label{fig:tree2}
\end{figure}

By setting $Q=\alpha L_{\alpha}+\beta L_{\beta}$ and applying Lemma~\ref{lem:intertwineExp}, we find that 
\beqn
1\otimes 1\otimes \delta\cdot 1\otimes 1\otimes e^{Q\tau_{34}}=1\otimes 1\otimes \left(\delta\cdot e^{Q\tau_{34}}\right)=1\otimes 1\otimes e^{\left(\alpha \mathfrak{L}^{\left[2\right]}_{\alpha}+\beta \mathfrak{L}^{\left[2\right]}_{\beta}\right)\tau_{34}}\cdot 1\otimes 1\otimes \delta.\nonumber
\eqn
Now, we note that
\beqn
1\otimes 1\otimes \delta \cdot 1\otimes \delta=\left[1\otimes \left(1\otimes \delta\right)\right] \cdot 1\otimes \delta=1\otimes \left(1\otimes \delta \cdot \delta\right)=1\otimes \delta^2,\nonumber
\eqn
so again applying Lemma~\ref{lem:intertwineExp} we find that
\beqn
1\otimes \delta^2\cdot 1\otimes e^{Q\tau_{234}}=1\otimes e^{\left(\alpha \mathfrak{L}^{\left[3\right]}_{\alpha}+\beta \mathfrak{L}^{\left[3\right]}_{\beta}\right)\tau_{234}}\cdot 1\otimes \delta^2.\nonumber
\eqn
We also set
\beqn
1\otimes \delta^2\cdot \delta =\delta^{3}.\nonumber
\eqn
Thus 
\beqn
P=e^{Q\tau_1}\otimes e^{Q\tau_2}\otimes e^{Q\tau_3}\otimes e^{Q\tau_4}\cdot 1\otimes 1\otimes e^{\left(\alpha \mathfrak{L}^{\left[2\right]}_{\alpha}+\beta \mathfrak{L}^{\left[2\right]}_{\beta}\right)\tau_{34}}\cdot 1\otimes e^{\left(\alpha \mathfrak{L}^{\left[3\right]}_{\alpha}+\beta \mathfrak{L}^{\left[3\right]}_{\beta}\right)\tau_{234}} \cdot \Ket{\delta^3\pi},\nonumber
\eqn 
with $\Ket{\delta^3\pi}:=\delta^3\cdot \Ket{\pi}=p_0\Ket{0000}+p_1\Ket{1111}$.
Finally, by applying Lemma~\ref{lem:ExpProduct} multiple times, we find that we can write
\beqn\label{eq:expPhyloTensor}
P=\exp\left[\tau_1\mathcal{R}_1+\tau_2\mathcal{R}_2+\tau_3\mathcal{R}_3+\tau_4\mathcal{R}_4\right]\cdot\exp\left[\tau_{34}\mathcal{R}_{34}\right]\cdot\exp\left[\tau_{234}\mathcal{R}_{234}\right] \cdot \Ket{\delta^3\pi},
\eqn
where
\beqn
\mathcal{R}_1&=\alpha \left(L_{\alpha}\otimes 1\otimes 1\otimes 1\right)+\beta \left( L_{\beta}\otimes 1\otimes 1\otimes 1\right),\\
\mathcal{R}_2&=\alpha \left(1\otimes  L_{\alpha}\otimes 1\otimes 1\right)+ \beta \left(1\otimes  L_{\beta}\otimes 1\otimes 1\right),\\
\mathcal{R}_3&=\alpha \left(1\otimes 1\otimes L_{\alpha}\otimes 1\right)+\beta\left(1\otimes 1\otimes L_{\beta}\otimes 1\right),\\
\mathcal{R}_4&=\alpha \left(1\otimes 1\otimes 1\otimes L_{\alpha}\right)+\beta\left(1\otimes 1\otimes 1\otimes L_{\beta}\right),\\
\mathcal{R}_{34}&=\alpha \left(1\otimes 1\otimes \mathfrak{L}_{\alpha}^{\left[2\right]}\right)+\beta\left( 1\otimes 1\otimes \mathfrak{L}_{\beta}^{\left[2\right]}\right),\\
\mathcal{R}_{234}&=\alpha \left(1\otimes  \mathfrak{L}_{\alpha}^{\left[3\right]}\right)+\beta\left(1\otimes \mathfrak{L}_{\beta}^{\left[3\right]}\right).\nonumber
\eqn 

Suppose instead of the tree above we considered the quartet given in Figure~\ref{fig:tree3}.
The phylogenetic tensor corresponding to this tree is given by
\beqn
P'&=e^{Q\tau_1}\otimes e^{Q\tau_2}\otimes e^{Q\tau_3}\otimes e^{Q\tau_4}\cdot \delta \otimes \delta \cdot e^{Q\tau_{12}} \otimes e^{Q\tau_{34}} \cdot \delta  \cdot \Ket{\pi}.\nonumber
\eqn
By a similar argument to the one just given, it is possible to show that this tensor can be re-expressed as
\beqn
P'=\exp\left[\tau_1\mathcal{R}_1+\tau_2\mathcal{R}_2+\tau_3\mathcal{R}_3+\tau_4\mathcal{R}_4\right]\cdot\exp\left[\tau_{12}\mathcal{R}_{12}+\tau_{34}\mathcal{R}_{34}\right]\cdot \Ket{\delta^3\pi},\nonumber
\eqn
with 
\beqn
\mathcal{R}_{12}&=\alpha \left(\mathfrak{L}_{\alpha}^{\left[2\right]}\otimes 1\otimes 1\right)+\beta\left(\mathfrak{L}_{\beta}^{\left[2\right]}\otimes 1\otimes 1\right).\nonumber
\eqn

We can extend our definition (\ref{def:higherorder}) of $\mathfrak{L}^{\left[n\right]}_{x}$ to arbitrary subsets by taking
\beqn
\mathfrak{L}^{A}_x:=\sum_{A\subseteq B, A\neq \emptyset}L_{x}^{(A)},\nonumber
\eqn
for all $A\subseteq \left[n\right]$.
Now label the elements of $A$ as $A=\{a_1,a_2,\ldots ,a_{|A|}\}$ and consider a permutation $\sigma \in \mathfrak{S}_n$ such that $\sigma (a_i)=i$ (obviously such a permutation always exists). 
If we allow $\sigma$ to act on $V^{\otimes n}$ by permuting tensor factors, it is clear that \[\sigma \left(\mathfrak{L}^{A}_x\right)=\mathfrak{L}^{\left[|A|\right]}_x\otimes 1^{\left(\left[n-|A|\right]\right)}.\]
From this we can conclude that for fixed $A$ the operators $\left\{\mathfrak{L}_{\alpha}^{A},\mathfrak{L}_{\beta}^{A}\right\}$ also satisfy the exact same algebra as $\left\{L_\alpha,L_\beta\right\}$:
\beqn
\left(\mathfrak{L}_{\alpha}^{A}\right)^2=\mathfrak{L}_{\beta}^{A}\mathfrak{L}_{\alpha}^{A}=-\mathfrak{L}_{\alpha}^{A},\qquad \left(\mathfrak{L}_{\beta}^{A}\right)^2=\mathfrak{L}_{\alpha}^{A}\mathfrak{L}_{\beta}^{A}=-\mathfrak{L}_{\beta}^{A},\nonumber
\eqn 
for all $A\subseteq \left[n\right]$.
Using this result, we can unify the expressions for the rate matrices above by defining
\beqn
\mathcal{R}_{A}:=\alpha \mathfrak{L}^{A}_\alpha+\beta \mathfrak{L}^A_\beta.\nonumber
\eqn


Evidently $A\cap B$ implies that $\left[\mathcal{R}_{A},\mathcal{R}_B\right]=0$ and we see that there are several ways we can express our two phylogenetic tensors.
For instance the following presentations are all equally valid:
\beqn
P&=\exp\left[\tau_1\mathcal{R}_1+\tau_2\mathcal{R}_2+\tau_3\mathcal{R}_3+\tau_4\mathcal{R}_4\right]\cdot\exp\left[\tau_{34}\mathcal{R}_{34}\right]\cdot\exp\left[\tau_{234}\mathcal{R}_{234}\right] \cdot \Ket{\delta^3\pi}\nonumber\\
&=\exp\left[\tau_2\mathcal{R}_2+\tau_3\mathcal{R}_3+\tau_4\mathcal{R}_4\right]\cdot\exp\left[\tau_1\mathcal{R}_1+\tau_{34}\mathcal{R}_{34}\right]\cdot\exp\left[\tau_{234}\mathcal{R}_{234}\right] \cdot \Ket{\delta^3\pi}\\
&=\exp\left[\tau_2\mathcal{R}_2+\tau_3\mathcal{R}_3+\tau_4\mathcal{R}_4\right]\cdot\exp\left[\tau_{34}\mathcal{R}_{34}\right]\cdot\exp\left[\tau_1\mathcal{R}_1+\tau_{234}\mathcal{R}_{234}\right] \cdot \Ket{\delta^3\pi}\\
&=\exp\left[\tau_3\mathcal{R}_3+\tau_4\mathcal{R}_4\right]\cdot\exp\left[\tau_{34}\mathcal{R}_{34}\right]\cdot\exp\left[\tau_1\mathcal{R}_1+\tau_2\mathcal{R}_2+\tau_{234}\mathcal{R}_{234}\right] \cdot \Ket{\delta^3\pi}.
\eqn


\begin{figure}[tbp]
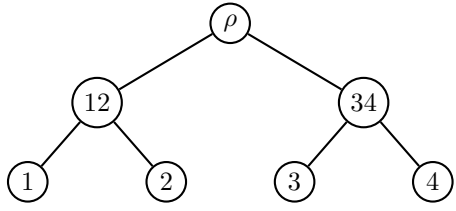

  \centering  
  $\psmatrix[colsep=.3cm,rowsep=.4cm,mnode=circle]
  && && \rho\\
  &&12 && && 34 \\
  &1& &2&  &3&  &4 
  \ncline{1,5}{2,3}
  \ncline{1,5}{2,7}
	\ncline{2,3}{3,2}
	\ncline{2,3}{3,4}
	\ncline{2,7}{3,6}
	\ncline{2,7}{3,8}
	 \endpsmatrix
  $
  \caption{Alternative tree on four leaves.}
  \label{fig:tree3}
\end{figure}
The content of our main theorem is that there is a canonical choice of presentation of a phylogenetic tensor for arbitrary trees.
Consider the presentations of the quartet tensors discussed above:
\beqn
P&=\exp\left[\tau_1\mathcal{R}_1+\tau_2\mathcal{R}_2+\tau_3\mathcal{R}_3+\tau_4\mathcal{R}_4\right]\cdot\exp\left[\tau_{34}\mathcal{R}_{34}\right]\cdot\exp\left[\tau_{234}\mathcal{R}_{234}\right] \cdot \Ket{\delta^3\pi},\\
P'&=\exp\left[\tau_1\mathcal{R}_1+\tau_2\mathcal{R}_2+\tau_3\mathcal{R}_3+\tau_4\mathcal{R}_4\right]\cdot\exp\left[\tau_{12}\mathcal{R}_{12}+\tau_{34}\mathcal{R}_{34}\right]\cdot \Ket{\delta^3\pi}.\nonumber
\eqn
\begin{thm}\label{thm:bigone}
Consider a rooted tree with $n$ leaves $\mathcal{T}=\{A_1,A_2,\ldots , A_{2n+2}\}$, where $A_i\subset \left[n\right]$.
Given a root distribution $\pi$, any rate parameters $\alpha$ and $\beta$ and weights $\{\tau_{A_1},\tau_{A_2},\ldots, \tau_{A_{2n+2}}\}$, a phylogenetic tensor (or joint distribution) $P$ at the leaves of this split system can be expressed as
\beqn
P=\exp\left[\mathcal{X}_1\right]\cdot\exp\left[\mathcal{X}_2\right]\cdot \ldots \cdot \exp\left[\mathcal{X}_{n-1}\right]\cdot  \Ket{\delta^{n-1} \pi},\nonumber
\eqn
with \[\Ket{\delta^{n-1} \pi}:=\delta^{n-1}\cdot \Ket{\pi}=p_0\Ket{00\ldots 0}+p_1\Ket{11\ldots 1},\] and \[\mathcal{X}_i=\sum_{A,|A|=i}\tau_{A}\mathcal{R}_{A}.\]
\end{thm}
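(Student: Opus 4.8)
The plan is to massage the generative, tree-ordered expression for the phylogenetic tensor until all of the ``splitting'' operators have been pushed to the far right (where they produce the vector $\Ket{\delta^{n-1}\pi}$) and each edge factor $e^{Q\tau_A}$ has been promoted to a lifted exponential acting on the tensor slots indexed by $A$; the commutation relations among the lifted generators then let the resulting factors be regrouped according to the cardinality of the defining subset.

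First I would pin down the generative form. For a vertex $v$ of $\mathcal{T}$ with cluster $S_v\subseteq[n]$ of size $m$, define the ``subtree operator'' $\Psi_v\colon V\to V^{\otimes m}$ recursively by $\Psi_v=\mathrm{id}_V$ if $v$ is a leaf, and $\Psi_v=(\Psi_{v_1}\otimes\Psi_{v_2})\cdot(e^{Q\tau_{S_{v_1}}}\otimes e^{Q\tau_{S_{v_2}}})\cdot\delta$ if $v$ has children $v_1,v_2$; then $P=\Psi_\rho\cdot\Ket{\pi}$, which up to harmless reorderings of commuting factors is exactly the construction used in the $n=4$ examples above. The central step is the following claim, proved by induction on $m=|S_v|$: if the ordered product below runs over the subsets labelling the edges strictly beneath $v$, taken in any order compatible with inclusion (so that $\exp[\tau_A\mathcal{R}_A]$ stands to the left of $\exp[\tau_B\mathcal{R}_B]$ whenever $A\subsetneq B$), then
\[
\Psi_v=\Big(\prod_{A\subsetneq S_v}\exp\big[\tau_A\mathcal{R}_A\big]\Big)\cdot\delta^{m-1},
\]
with $\mathcal{R}_A$ realised on the slots of $V^{\otimes m}$ indexed by $A$.

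The base case $m=1$ is immediate. For the inductive step, writing $m_i=|S_{v_i}|$, I would: (i) insert the induction hypothesis for $\Psi_{v_1}$ and $\Psi_{v_2}$ and use that operators supported on disjoint tensor slots commute, together with Lemma~\ref{lem:ExpProduct}, to rewrite $\Psi_{v_1}\otimes\Psi_{v_2}$ as a product of the ``deeper'' exponentials $\exp[\tau_A\mathcal{R}_A]$ ($A\subsetneq S_{v_1}$ or $A\subsetneq S_{v_2}$) followed by $\delta^{m_1-1}\otimes\delta^{m_2-1}$; (ii) apply Lemma~\ref{lem:intertwineExp} within each tensor factor to slide $\delta^{m_i-1}$ through $e^{Q\tau_{S_{v_i}}}$, and then invoke the relabelling permutation preceding the theorem (which turns $\mathfrak{L}^{[k]}_x$ into $\mathfrak{L}^{A}_x$ for $|A|=k$) to recognise the new factor as $\exp[\tau_{S_{v_i}}\mathcal{R}_{S_{v_i}}]$ -- crucially produced to the right of every deeper factor on its side, exactly as the ordering demands; (iii) collapse $(\delta^{m_1-1}\otimes\delta^{m_2-1})\cdot\delta$ into $\delta^{m-1}$ using the associativity of $\delta$ established above. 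Evaluating the resulting identity at $v=\rho$ gives $P=\big(\prod_{A\in\mathcal{T}}\exp[\tau_A\mathcal{R}_A]\big)\cdot\Ket{\delta^{n-1}\pi}$ with the product ordered compatibly with inclusion of subsets.

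It remains to regroup by cardinality. The edge clusters of a tree form a laminar family, so any two of them are nested or disjoint, and two are incomparable under inclusion precisely when they are disjoint -- in which case $[\mathcal{R}_A,\mathcal{R}_B]=0$. Hence any two linear orders on $\mathcal{T}$ compatible with inclusion produce the same ordered product, since one passes between them by transposing only commuting adjacent factors; choosing the order that lists subsets by non-decreasing size, and using that distinct subsets of equal size are disjoint (so $\prod_{|A|=i}\exp[\tau_A\mathcal{R}_A]=\exp[\sum_{|A|=i}\tau_A\mathcal{R}_A]=\exp[\mathcal{X}_i]$), yields $P=\exp[\mathcal{X}_1]\exp[\mathcal{X}_2]\cdots\exp[\mathcal{X}_{n-1}]\cdot\Ket{\delta^{n-1}\pi}$, as claimed. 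I expect the main obstacle to be steps (ii)--(iii): one must verify that combing out the $\delta$'s genuinely manufactures the lifted edge exponentials in an order compatible with inclusion of their subsets, so that in the regrouping no two exponentials with nested subsets are ever commuted past one another -- which would be illegitimate, since $[\mathcal{R}_A,\mathcal{R}_B]\neq 0$ for nested $A,B$ in general. The inductive set-up above is arranged precisely so that each shallower edge factor is born to the right of all deeper ones; granting that, the regrouping is a soft consequence of the laminar structure together with $[\mathcal{R}_A,\mathcal{R}_B]=0$ for disjoint $A,B$.
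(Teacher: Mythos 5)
Your proof is correct, but it takes a genuinely different route from the paper's. The paper inducts on the number of leaves: assuming the cardinality-grouped normal form for a $k$-leaf tensor, it grafts a new leaf by applying $1\otimes\cdots\otimes 1\otimes\delta$, pushes this $\delta$ through every exponential (promoting each $\mathcal{R}_A$ with $k\in A$ to $\mathcal{R}_{A\cup\{k+1\}}$), and then restores the cardinality grouping by observing that in each cardinality class exactly one subset has been promoted, and that this promoted subset is disjoint from the other members of its new class, so its exponential can be peeled off and absorbed into the adjacent factor. You instead run a structural induction on subtrees to exhibit $P$ as a single ordered product $\prod_{A\in\mathcal{T}}\exp\left[\tau_A\mathcal{R}_A\right]$ taken in an arbitrary linear extension of the inclusion order, and only afterwards regroup by cardinality using laminarity of the cluster set: incomparable clusters of a tree are disjoint, hence their $\mathcal{R}$'s commute, so all inclusion-compatible orders give the same operator and the equal-cardinality factors merge into $\exp\left[\mathcal{X}_i\right]$. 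Your route isolates, as an explicit intermediate statement, the fact that the ordered product is well defined for \emph{every} inclusion-compatible order --- something the paper only remarks on informally after the theorem --- and it makes transparent precisely which commutations are invoked (never two nested subsets past one another, where $\left[\mathcal{R}_A,\mathcal{R}_B\right]\neq 0$). The paper's route is more economical in that it never leaves the cardinality normal form, but its bookkeeping of the promoted subsets is correspondingly more delicate. Both arguments ultimately rest on the same ingredients: Lemma~\ref{lem:intertwineExp}, Lemma~\ref{lem:ExpProduct}, and the vanishing of $\left[\mathcal{R}_A,\mathcal{R}_B\right]$ for disjoint $A,B$.
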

\begin{proof}
The proof is by induction. 
Clearly a phylogenetic tensor on two leaves can be placed into the required form:
\[
P=\exp\left[\tau_1\mathcal{R}_1+\tau_{2}\mathcal{R}_2\right]\cdot \Ket{\delta \pi}.
\]
Assume that for $k>2$ any phylogenetic tensor on $k$ leaves $P^{(k)}$ can be expressed in the required form:
\[
P^{(k)}=\exp\left[\tau_{1}\mathcal{R}_{1}+\tau_{2}\mathcal{R}_{2}+\ldots +\tau_n\mathcal{R}_{n}\right]\cdot \exp\left[\sum_{A,|A|=2}\tau_{A}\mathcal{R}_{A}\right]\cdot \ldots \cdot \exp\left[\sum_{A,|A|=k-1}\tau_{A}\mathcal{R}_{A}\right]\cdot \Ket{\delta^{k-1}\pi}. 
\]
Without loss of generality we generate a phylogenetic tensor on $k+1$ leaves, $P^{(k+1)}$, by branching $P^{(k)}$ at the leaf $k$.
This is expressed algebraically by
\[
P^{(k)}\rightarrow P^{(k+1)}=\left(1\otimes 1\otimes \ldots \otimes 1\otimes \delta\right) \cdot P^{(k)}.
\]
For an arbitrary subset $A\subset \left[k\right]$, we have
\[
1\otimes 1\otimes \ldots \otimes 1\otimes \delta \cdot \mathcal{R}_{A}=\mathcal{R}_{A'}
\]
where
\[
A'=
\left\{
\begin{array}{l}
{A\cup \{k+1\}}\text{ , if }k\in A, \\
A\text{ , otherwise. }
\end{array}
\right.
\]
Pushing right through we have
\[
P^{(k+1)}=\exp\left[\sum_{A,|A|=1}\tau_{A}\mathcal{R}_{A'}\right]\cdot \exp\left[\sum_{A,|A|=2}\tau_{A}\mathcal{R}_{A'}\right]\cdot \ldots \cdot \exp\left[\sum_{A,|A|=n-1}\tau_{A}\mathcal{R}_{A'}\right]\cdot \Ket{\delta^{k}\pi}.
\]
Now consider, for a given $1\leq i < k-1$, the term
\[
\exp\left[\sum_{A,|A|=i}\tau_{A}\mathcal{R}_{A'}\right].
\]
As we are dealing with a tree $\mathcal{T}$, it is only possible  that one (and only one) of the subsets ($B_{i+1}$, say) in the summation has cardinality $i+1$.
Additionally, $B$ is disjoint from all of the other subsets, so we may write
\[
\exp\left[\sum_{A,|A|=i,A\neq \emptyset}\tau_{A}\mathcal{R}_{A'}\right]=\exp\left[\sum_{A,|A|=i,A\neq \emptyset,A\neq B}\tau_{A}\mathcal{R}_{A}\right]\exp\left[\tau_{B}\mathcal{R}_{B_{i+1}}\right].
\]    
Exactly the same argument is valid for the $i+1$ term: 
\[
\exp\left[\sum_{A,|A|=i+1,A\neq \emptyset}\tau_{A}\mathcal{R}_{A'}\right]=\exp\left[\sum_{A,|A|=i+1,A\neq \emptyset,A\neq B_{i+2}}\tau_{A}\mathcal{R}_{A}\right]\exp\left[\tau_{B}\mathcal{R}_{B+2}\right].
\]
Thus we can express the product of these two terms as
\beqn
\exp&\left[\sum_{A,|A|=i}\tau_{A}\mathcal{R}_{A'}\right]\cdot  \exp\left[\sum_{A,|A|=i+1}\tau_{A}\mathcal{R}_{A'}\right]\\
&=\exp\left[\sum_{A,|A|=i,A\neq \emptyset,A\neq B}\tau_{A}\mathcal{R}_{A}\right]\exp\left[\tau_B+\mathcal{R}_{B_{i+1}}+\sum_{A,|A|=i+1,A\neq \emptyset,A\neq B_{i+2}}\tau_{A}\mathcal{R}_{A}\right]\\
&\hspace{5em}\cdot\exp\left[\tau_{B}\mathcal{R}_{B+2}\right].\nonumber
\eqn
Continuing in this way we can place $P^{(k+1)}$ in the required form and the theorem follows by induction on $k$.
\end{proof}
Recall that, in the basis $\{\Ket{i_1i_2\ldots i_n}\}$, a phylogenetic tensor $P$ can be expressed as
\beqn
P=\sum_{i_1,i_2,\ldots i_n}p_{i_1i_2\ldots i_n}\Ket{i_1i_2\ldots i_n},\nonumber
\eqn
where $p_{i_1i_2\ldots i_n}$ is the probability of observing the pattern $i_1i_2\ldots i_n$ at the leaf vertices. 

From a biological perspective, it is apparent that the form given in Theorem~\ref{thm:bigone} that utilizes a cardinality ordering is somewhat mysterious.
However a little thought using commutivity (or otherwise) of the various operators shows that it is not so much the cardinality that matters, but it is that the operators that arise independently across branches of the tree are necessarily commutative and, conversely, those that do not commmute necessarily have non-zero intersection and hence are not independent.
Noting that there is some freedom in the final expression, we see that the cardinality ordering is simply a nice way of unifying the description for arbitrary trees. 

It is clear that if we take $P$ as in Theorem~\ref{thm:bigone} in the case that the split system $\mathcal{S}$ is compatible, we have a probability distribution on a tree identical to the standard presentation usually given in phylogenetics. 
However, the construction we have given naturally generalizes to a model on the most general split systems with trees occuring as sub-models (set the weights for incompatible splits equal to zero).
There is also an obvious generalization to the case where each split has a unique rate matrix -- simply give additional split labels to the rate parameters: $\alpha \rightarrow \alpha_{A}$ and $\beta \rightarrow \beta_{A}$.

In the next section we explore the $\alpha=\beta$ case in detail.

\section{(In)consistency with the Hadamard conjugation}\label{sec:incon}

In this section we consider the ``binary-symmetric'' case where $\alpha\!=\!\beta\!=\!\frac{1}{2}$, so that we can write
\beqn
Q=\frac{1}{2}\left(L_\alpha+L_\beta\right)=
\frac{1}{2}\left(
\begin{array}{rr}
	-1 & 1 \\
	1  & -1 \\
\end{array}\right)=\frac{1}{2}\left(-1+K\right),\nonumber
\eqn
where $K=\left(
\begin{array}{rr}
	0 & 1 \\
	1  & 0 \\
\end{array}\right)$ is the permutation matrix taking $\Ket{0}\rightleftharpoons \Ket{1}$.

For any tree $\mathcal{T}$ represented as a split system, it is shown in \citet{bashford2004} that we can write
\beqn
P=\exp\left[\sum_{x\in \mathcal{T}}w_x\frac{1}{2}\left(K^{(x)}-1\otimes 1\otimes \ldots \otimes 1\right)\right]\Ket{\delta^{n-1}\pi},\nonumber
\eqn
where $\{w_x\}_{x\in \mathcal{T}}$ is any set of edge weights on $\mathcal{T}$.
We will refer to this constructing of a phylogenetic tensor as the ``$K$-representation''.

On the other hand, we have shown in Theorem~\ref{thm:bigone} that for $\alpha\!=\!\beta\!=\!\frac{1}{2}$ we can write 
\beqn
P=\exp\left[\mathcal{X}_1\right]\cdot \exp\left[\mathcal{X}_2\right]\cdot \ldots \cdot \exp\left[\mathcal{X}_{n}\right]\Ket{\delta^{n-1}\pi},\nonumber
\eqn
with 
\beqn
\mathcal{X}_i=\sum_{A,|A|=i}\tau_{A}\mathcal{R}_A,\nonumber
\eqn
and
\beqn
\mathcal{R}_{A}=\sum_{B\subseteq A}\fra{1}{2}\left(\mathfrak{L}_{\alpha}^{B}+\mathfrak{L}_{\beta}^{B}\right).\nonumber
\eqn
We will refer to this construction of a phylogenetic tensor as the ``$\mathfrak{L}$-representation''.
We will show that for a tree these two representations are exactly equal, but for arbitrary split systems this is not the case.

We will find it convenient to label the vector $\Ket{i_1i_2\ldots i_n}$ by the subset $A\subset \left[n\right]$ defined by setting $j\in A$ if and only if $i_j=1$. 
For example if $n=6$, we have
\beqn
\Ket{\emptyset}&=\Ket{000000},\\
\Ket{\left[6\right]}=\Ket{\{1,2,3,4,5,6\}}&=\Ket{111111},\\
\Ket{\{2,3,4\}}&=\Ket{011100},\\
\Ket{\{5\}}&=\Ket{000010}.\nonumber
\eqn

\begin{lem}\label{lem:equiv}
If 
\begin{itemize}
\item[(a.)] $A\subseteq B$, it follows that $\mathcal{R}_A\Ket{B}=\Ket{B-A}-\Ket{B},$
\item[(b.)] $A\cap B=\emptyset$, it follows that $\mathcal{R}_A\Ket{B}=\Ket{B\cup A}-\Ket{B}$.
\end{itemize}
If either case (a.) or (b.), we have
\beqn
\mathcal{R}_A\Ket{B}=\left(K^{(A)}-1\otimes 1\otimes \ldots \otimes 1\right)\Ket{B}.\nonumber
\eqn
\end{lem}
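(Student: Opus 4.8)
The plan is to prove (a.) and (b.) by direct computation of the action of the operators $L_\alpha^{(C)}$ and $L_\beta^{(C)}$ on a basis vector $\Ket{B}$, then assemble the pieces through the definition $\mathcal{R}_A=\sum_{C\subseteq A,\,C\neq\emptyset}\tfrac12\bigl(L_\alpha^{(C)}+L_\beta^{(C)}\bigr)$. Recall from (\ref{eq:basisaction}) that $L_\alpha$ annihilates $\Ket{1}$ and sends $\Ket{0}\mapsto\Ket{1}-\Ket{0}$, while $L_\beta$ annihilates $\Ket{0}$ and sends $\Ket{1}\mapsto\Ket{0}-\Ket{1}$. Hence $\tfrac12(L_\alpha+L_\beta)$ acts on a single tensor slot as $\Ket{j}\mapsto\tfrac12(\Ket{1-j}-\Ket{j})=\tfrac12(K-1)\Ket{j}$ regardless of the value of $j$. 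The first key observation is therefore that, on any single slot, $\tfrac12(L_\alpha+L_\beta)=\tfrac12(K-1)$ as operators on $V$, so it will be cleaner to work throughout with the combination $N:=\tfrac12(K-1)$ and note $N^{(C)}=\prod_{i\in C}N^{(i)}$.

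First I would establish the pointwise identity $\mathcal{R}_A=\sum_{C\subseteq A,\,C\neq\emptyset}N^{(C)}$ (immediate from the slot-wise identity above and the definitions of $\mathfrak{L}^C_x$). Next, since $N^2=\tfrac14(K-1)^2=\tfrac14(2-2K)=-N$ (because $K^2=1$), the operators $N^{(i)}$ for distinct $i$ commute and each is idempotent-up-to-sign, so the sum over nonempty subsets telescopes: $\sum_{C\subseteq A,\,C\neq\emptyset}N^{(C)}=\prod_{i\in A}\bigl(1+N^{(i)}\bigr)-1$. Now $1+N^{(i)}=1+\tfrac12(K^{(i)}-1)=\tfrac12(1+K^{(i)})$, and on a single slot $\tfrac12(1+K)\Ket{j}=\tfrac12(\Ket{j}+\Ket{1-j})$; crucially $\tfrac12(1+K)$ is the projection onto the symmetric vector, but more useful here is that $(1+K^{(i)})$ applied to $\Ket{B}$ produces $\Ket{B}+\Ket{B\triangle\{i\}}$. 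The plan is to iterate: $\prod_{i\in A}\tfrac12(1+K^{(i)})\Ket{B}=\tfrac1{2^{|A|}}\sum_{C\subseteq A}\Ket{B\triangle C}$. Therefore $\mathcal{R}_A\Ket{B}=\tfrac1{2^{|A|}}\sum_{C\subseteq A}\Ket{B\triangle C}-\Ket{B}$.

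The final step is to evaluate $B\triangle C$ under the two hypotheses. If $A\subseteq B$, then for $C\subseteq A$ we have $B\triangle C=B\setminus C$, so $\sum_{C\subseteq A}\Ket{B\triangle C}=\sum_{C\subseteq A}\Ket{B\setminus C}$ — but this is $2^{|A|}$ terms, not one, so the naive telescoping must be wrong and I should instead multiply out $\prod_{i\in A}(1+N^{(i)})$ keeping track of signs properly: $N^{(i)}\Ket{B}=\tfrac12(\Ket{B\triangle\{i\}}-\Ket{B})$, and the main obstacle is checking that the cross terms cancel. Concretely, $\mathcal{R}_A\Ket{B}=\bigl(\prod_{i\in A}(1+N^{(i)})-1\bigr)\Ket{B}$, and since for $i\in A\subseteq B$ the operator $N^{(i)}$ flips bit $i$ from $1$ to $0$ with coefficient $\tfrac12$ and subtracts $\tfrac12\Ket{B}$, one verifies by induction on $|A|$ that $\prod_{i\in A}(1+N^{(i)})\Ket{B}=\Ket{B\setminus A}$ when $A\subseteq B$ — the point being that once bit $i$ is zeroed, $N^{(i)}$ acts on it again the same way, and the geometric-like cancellation collapses the sum to the single term $\Ket{B\setminus A}$; this uses $N^{(i)}N^{(i)}=-N^{(i)}$ so $(1+N^{(i)})$ restricted to the two-dimensional $i$-slot has the single nonzero eigenvector. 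The same induction with $A\cap B=\emptyset$ gives $\prod_{i\in A}(1+N^{(i)})\Ket{B}=\Ket{B\cup A}$, since each $N^{(i)}$ now flips bit $i$ from $0$ to $1$. Subtracting $\Ket{B}$ yields (a.) and (b.) respectively. The unified statement follows because $K^{(A)}\Ket{B}=\Ket{B\triangle A}$, which equals $\Ket{B\setminus A}=\Ket{B-A}$ when $A\subseteq B$ and $\Ket{B\cup A}$ when $A\cap B=\emptyset$, matching both cases; hence $\mathcal{R}_A\Ket{B}=(K^{(A)}-1)\Ket{B}$ in either case.

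I expect the main obstacle to be the bookkeeping in the induction showing $\prod_{i\in A}(1+N^{(i)})\Ket{B}$ collapses to a single basis vector rather than a sum of $2^{|A|}$ terms; the resolution is the algebraic identity $(1+N^{(i)})^2=1+2N^{(i)}+(N^{(i)})^2=1+2N^{(i)}-N^{(i)}=1+N^{(i)}$, i.e. each factor is idempotent, combined with the fact that once the $i$-th bit has been set to its "target" value (0 if $i\in B$, 1 if $i\notin B$) the factor $(1+N^{(i)})$ acts as the identity on that slot. Everything else is routine.
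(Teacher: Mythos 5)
There is a genuine gap, and it sits at the very first step. You assert that $\mathcal{R}_A=\sum_{C\subseteq A,\,C\neq\emptyset}N^{(C)}$ with $N=\tfrac12(L_\alpha+L_\beta)=\tfrac12(K-1)$, calling it ``immediate from the slot-wise identity.'' It is not: $N^{(C)}=\prod_{i\in C}\bigl(\tfrac12(L_\alpha+L_\beta)\bigr)^{(i)}$ expands into \emph{mixed} terms such as $L_\alpha^{(i)}L_\beta^{(j)}$, whereas $\tfrac12\bigl(L_\alpha^{(C)}+L_\beta^{(C)}\bigr)=\tfrac12\bigl(\prod_{i\in C}L_\alpha^{(i)}+\prod_{i\in C}L_\beta^{(i)}\bigr)$ contains only the two pure terms. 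These are different operators for $|C|\geq 2$; the paper makes exactly this point in Section~\ref{sec:epochs} when it writes $K^{(23)}=\mathcal{R}_{23}+1\otimes\left(L_\alpha\otimes L_\beta+L_\beta\otimes L_\alpha\right)$ --- the cross terms are precisely what separates $\mathcal{R}_A$ from any polynomial in the $K^{(i)}$, and erasing them (as your identity does) would also erase the entire content of the $\mathfrak{L}$-versus-$K$ comparison. Your second patch fails too: $1+N^{(i)}=\tfrac12\bigl(1+K^{(i)}\bigr)$ is a projection onto the symmetric vector in slot $i$, so $\prod_{i\in A}\bigl(1+N^{(i)}\bigr)\Ket{B}=2^{-|A|}\sum_{C\subseteq A}\Ket{B\triangle C}$ is a uniform mixture of $2^{|A|}$ basis vectors, not the single vector $\Ket{B\setminus A}$. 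The claim that $N^{(i)}$ ``acts as the identity once bit $i$ is at its target value'' is false --- $N^{(i)}$ sends every basis state $\Ket{j}$ to $\tfrac12(\Ket{1-j}-\Ket{j})$, never fixing it --- and idempotence of $1+N^{(i)}$ only says it is a projection, which is what produces the mixture rather than the collapse.

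The correct mechanism, which your write-up never invokes, is the asymmetric action in (\ref{eq:basisaction}): $L_\alpha\Ket{1}=0$ and $L_\beta\Ket{0}=0$. When $A\subseteq B$ every slot of $A$ carries $\Ket{1}$, so every term of $\mathfrak{L}_\alpha^A$ annihilates $\Ket{B}$ and only the pure-$\beta$ sum survives; there the telescoping you wanted is legitimate because $\sum_{C\subseteq A,\,C\neq\emptyset}L_\beta^{(C)}=\prod_{i\in A}\bigl(1+L_\beta^{(i)}\bigr)-1$ and $\bigl(1+L_\beta\bigr)\Ket{1}=\Ket{0}$ exactly (no factor of $\tfrac12$, no leftover term), giving $\Ket{B-A}-\Ket{B}$. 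The disjoint case is identical with $\alpha$ and $\beta$ exchanged. It is only \emph{because} the hypotheses force all slots of $A$ into a common state that the cross terms never get a chance to act, which is why the identification with $K^{(A)}-1\otimes\cdots\otimes 1$ holds on these states and not in general. (The paper's own proof reaches the same conclusion by reducing to $\mathcal{R}_{[n]}\Ket{[n]}$ and $\mathcal{R}_{[n]}\Ket{\emptyset}$ via a permutation of tensor factors.)
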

\begin{proof}
For all $A\subseteq B$ we have
\beqn
K^{(A)}\Ket{B}&=\Ket{B-A}.\nonumber
\eqn
On the other hand we know that
\beqn
\mathcal{R}_{\left[n\right]}\Ket{\emptyset}&=\Ket{\left[n\right]}-\Ket{\emptyset},\nonumber\\
\mathcal{R}_{\left[n\right]}\Ket{\left[n\right]}&=\Ket{\emptyset}-\Ket{\left[n\right]}.
\eqn
For $A\subseteq B\subseteq \left[n\right]$ it is always possible to permute tensor factors to write
\beqn
\Ket{B}\cong \Ket{\left[n_A\right]}\otimes \Ket{B'},\nonumber
\eqn
where $n_A=|A|$ and $B'\subset \left[n\right]-A$ with $B':=B-A$.
Thus
\beqn
\mathcal{R}_A\Ket{B}&\cong \left(\mathcal{R}_{\left[n_A\right]}\Ket{\left[n_A\right]}\right)\otimes \Ket{B'}\nonumber\\
&=\left(\Ket{\emptyset}-\Ket{\left[n_a\right]}\right)\otimes \Ket{B'}\\
&\cong \Ket{B-A}-\Ket{\left(B-A\right)\cup A}\\
&=\Ket{B-A}-\Ket{B},
\eqn
which proves the lemma for $A\subseteq B$.
The $A\cap B=\emptyset$ case follows from a similar argument.
\end{proof}

For two subsets $A,B$ taken from a compatible split system with $|A|=|B|$ it is the case that $A\cap B=\emptyset$, which in turn implies that $\left[\mathcal{R}_A,\mathcal{R}_B\right]=0$.
Thus we can make the replacement 
\beqn
\exp\left[\mathcal{X}_i\right]:=\exp\left[\sum_{A,|A|=i}\tau_{A}\mathcal{R}_{A}\right]=\prod_{A,|A|=i}\exp\left[\tau_A\mathcal{R}_A\right],\nonumber
\eqn
where by commutivity the product can be ordered in any way we please. 

Thus, in the $\mathfrak{L}$-representation, we see that for a tree we can write
\beqn
P=\ldots \exp\left[\tau_{A_3} \mathcal{R}_{A_3}\right]\exp\left[\tau_{A_2} \mathcal{R}_{A_2}\right]\exp\left[\tau_{A_1} \mathcal{R}_{A_1}\right]\Ket{\delta^{n-1}\pi},\nonumber
\eqn
with, due to the fact we are dealing with a tree, \emph{either} $A_i\cap A_{i+1}=\emptyset$ \emph{or} $A_{i+1}\subset A_{i}$.
Noting that $\Ket{\delta^{n-1}\pi}=\pi_0\Ket{\emptyset}+\pi_1\Ket{\left[n\right]}$ and repeated application of Lemma~\ref{lem:equiv} then gives:

\begin{thm}
For compatible split systems, the ``$\mathfrak{L}$-representation'' and the ``$K$-representation'' give rise to identical phylogenetic tensors.
\end{thm}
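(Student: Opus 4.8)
The plan is to normalise both presentations to the same ``$K$-only'' form and then compare. First I would combine the commutations already available: within each cardinality level the splits appearing in $\mathcal{X}_i$ are pairwise disjoint, so $\exp[\mathcal{X}_i]=\prod_{|A|=i}\exp[\tau_A\mathcal{R}_A]$, and the $\mathfrak{L}$-tensor becomes an ordered product
\[
P=\exp\bigl[\tau_{A_m}\mathcal{R}_{A_m}\bigr]\cdots\exp\bigl[\tau_{A_1}\mathcal{R}_{A_1}\bigr]\cdot\Ket{\delta^{n-1}\pi}
\]
over all splits of $\mathcal{T}$, listed with non-increasing cardinality so that $A_1$ acts first. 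On the other side, $K^{(A)}K^{(B)}\Ket{C}=\Ket{C\triangle A\triangle B}=K^{(B)}K^{(A)}\Ket{C}$, so the operators $K^{(x)}-1\otimes1\otimes\cdots\otimes1$ all commute, and the $K$-tensor likewise splits as $\prod_{x}\exp\bigl[\tfrac12 w_x(K^{(x)}-1\otimes1\otimes\cdots\otimes1)\bigr]\cdot\Ket{\delta^{n-1}\pi}$ in the same order. After the reparametrisation $w_x=2\tau_x$ it therefore suffices to show that the corresponding partial products agree at each stage when applied to $\Ket{\delta^{n-1}\pi}$.

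I would do this by induction on the number $j$ of factors applied, carrying along a support invariant. Set $v_0=\Ket{\delta^{n-1}\pi}=\pi_0\Ket{\emptyset}+\pi_1\Ket{[n]}$ and let $v_j$, $\tilde v_j$ denote the results of the first $j$ factors acting on $v_0$ in the $\mathfrak{L}$- and $K$-pictures respectively. The inductive claim is: $v_j=\tilde v_j$, and for every basis vector $\Ket{B}$ occurring in $v_j$ each not-yet-applied split $A_{j+1},\dots,A_m$ is either contained in $B$ or disjoint from $B$. The base case holds since $\emptyset$ is disjoint from, and $[n]$ contains, every split. For the step, the support invariant says $A_{j+1}$ is nested in or disjoint from every $B\in\mathrm{supp}(v_j)$, so Lemma~\ref{lem:equiv} gives $\mathcal{R}_{A_{j+1}}\Ket{B}=(K^{(A_{j+1})}-1\otimes1\otimes\cdots\otimes1)\Ket{B}=\Ket{B\triangle A_{j+1}}-\Ket{B}$; as $A_{j+1}$ is then also nested in or disjoint from $B\triangle A_{j+1}$, the span of $\Ket{B}$ and $\Ket{B\triangle A_{j+1}}$ is invariant under both $\mathcal{R}_{A_{j+1}}$ and $K^{(A_{j+1})}-1\otimes1\otimes\cdots\otimes1$, on which they act identically, so $\exp[\tau_{A_{j+1}}\mathcal{R}_{A_{j+1}}]$ and $\exp[\tfrac12 w_{A_{j+1}}(K^{(A_{j+1})}-1\otimes1\otimes\cdots\otimes1)]$ coincide on $\Ket{B}$ once $w_{A_{j+1}}=2\tau_{A_{j+1}}$. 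Summing over $\mathrm{supp}(v_j)$ and using $v_j=\tilde v_j$ gives $v_{j+1}=\tilde v_{j+1}$, while $\mathrm{supp}(v_{j+1})\subseteq\{B,\,B\triangle A_{j+1}:B\in\mathrm{supp}(v_j)\}$. To close the induction I would re-establish the support invariant for these sets: for any later split $A_k$ with $k>j+1$ we have $|A_k|\le|A_{j+1}|$, so compatibility forces $A_k\subseteq A_{j+1}$ or $A_k\cap A_{j+1}=\emptyset$, and a short check over the two cases for $B$ versus $A_{j+1}$ and the two cases for $A_k$ versus $A_{j+1}$ shows $A_k$ is still contained in or disjoint from $B\triangle A_{j+1}$.

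Taking $j=m$ then yields $P_{\mathfrak{L}}=v_m=\tilde v_m=\exp\bigl[\sum_{x\in\mathcal{T}}w_x\tfrac12(K^{(x)}-1\otimes1\otimes\cdots\otimes1)\bigr]\Ket{\delta^{n-1}\pi}=P_K$, which is the required identity (the two weight parametrisations being related by $w_x=2\tau_x$). I expect the propagation of the support invariant to be the one point that needs real care, since that is where compatibility of the split system and the non-increasing cardinality ordering are both used essentially; the remaining steps are just bookkeeping of commutations together with Lemma~\ref{lem:equiv}. If one prefers not to track supports explicitly, an equivalent route is to observe that the partition of $[n]$ cut out by the already-applied splits is refined at each stage and that the next split, being of no larger size, lies inside a single block of it; this delivers the same conclusion.
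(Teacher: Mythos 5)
Your proof is correct and follows essentially the same route as the paper: reduce both presentations to an ordered product of single-split exponentials (using disjointness within each cardinality level and commutativity of the $K^{(x)}$), then apply Lemma~\ref{lem:equiv} repeatedly starting from $\Ket{\delta^{n-1}\pi}=\pi_0\Ket{\emptyset}+\pi_1\Ket{[n]}$. The only difference is that you make explicit the support invariant (every not-yet-applied split is nested in or disjoint from every basis vector in the current support) whose propagation the paper leaves implicit in the phrase ``repeated application of Lemma~\ref{lem:equiv}''; your verification of that invariant is exactly the point where the laminar structure of the tree's split system and the cardinality ordering are used, and it checks out.
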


For \emph{arbitrary} split systems, however, this is not true, as the example in the next section shows.

\section{Phylogenetic networks and ``epochs''}\label{sec:epochs}

Consider the three taxa phylogenetic tree given in Figure~\ref{fig:3taxa}~(a).
As was proved above, if we take the binary symmetric model we get an identical probability distribution if we use either the $\mathfrak{L}$-representation $P_\ell=\exp\left[\tau_1\mathcal{R}_1+\tau_2\mathcal{R}_2+\tau_3\mathcal{R}_3\right]\cdot\exp\left[\tau_{12}\mathcal{R}_{12}\right]\Ket{\delta^2 \pi}$ or the $K$-representation $P_k=e^{-\lambda}\exp\left[\tau_1K^{(1)}+\tau_2K^{(2)}+\tau_3K^{(3)}+\tau_{12}K^{(12)}\right]\Ket{\delta^2 \pi}$, with $\lambda=\tau_1+\tau_2+\tau_3+\tau_{12}$.

We would like to introduce the additional parameter $\tau_{23}$ associated with the ``imaginary'' split $1|23$ to these probability distributions.  
We will do this in a way which is consistent with the design given in Figure~\ref{fig:3taxa}~(b), where the evolutionary history in broken up into three epochs: a. divergence of taxa 3 away from 1 and 2, b. convergent evolution of taxa 2 and 3, with independent divergence of taxa 3, and c. independent divergence of all taxa.

\begin{figure}[tbp]
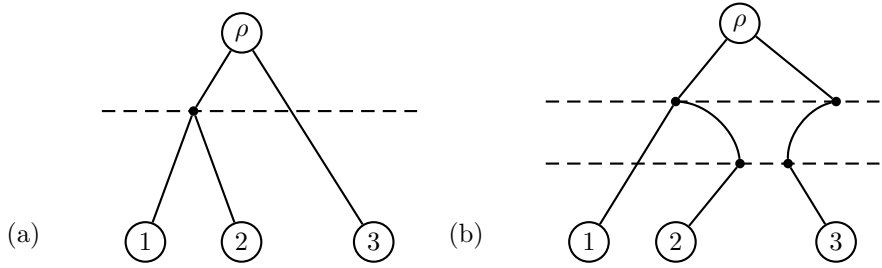

\centering
\label{fig:3taxa}

\begin{tabular}{cc}
(a)\hspace{2em}
$
\psmatrix[colsep=.3cm,rowsep=.7cm,mnode=circle,arrowscale=2]
&&&\rho\\
[mnode=dot,dotscale=.00001]&&[mnode=dot,dotscale=1]&&&&&&[mnode=dot,dotscale=.00001]\\
[mnode=dot,dotscale=.00001]\\
&1&&2&&&&3
\ncline{1,4}{2,3}
\ncline{1,4}{4,8}
\ncline{2,3}{4,2}
\ncline{2,3}{4,4}
\psset{linestyle=dashed}\ncline{2,1}{2,9}
\endpsmatrix
$ 

&
(b)\hspace{2em} 
$
\psmatrix[colsep=.3cm,rowsep=.7cm,mnode=circle,arrowscale=2]
&&&&\rho\\
[mnode=dot,dotscale=.00001]&&&[mnode=dot,dotscale=1]&&&[mnode=dot,dotscale=1]&[mnode=dot,dotscale=.00001]\\
[mnode=dot,dotscale=.00001]&&&&[mnode=dot,dotscale=1]&[mnode=dot,dotscale=1]&&[mnode=dot,dotscale=.00001]\\
&1&&2&&&3
\ncline{1,5}{2,4}
\ncline{1,5}{2,7}
\ncline{2,4}{4,2}
\ncarc[arcangle=40]{2,4}{3,5}
\ncarc[arcangle=-40]{2,7}{3,6}
\ncline{3,5}{4,4}
\ncline{3,6}{4,7}
\psset{linestyle=dashed}\ncline{2,1}{2,8}
\psset{linestyle=dashed}\ncline{3,1}{3,8}
\endpsmatrix
$ 
\\
\end{tabular}

\caption{A three taxa tree (a.) is modified by the introduction of the addtional split $\{23\}$ in (b).}
\end{figure}

To model this situation we must introduce the additional edge to each representation. 
To make the presentation as simple as possible, we set a molecular clock on the model such that $\tau_2=\tau_1$ and $\tau_3=\tau_{12}+\tau_{1}$, and we introduce a scaling parameter $\theta\in\left[0,1\right]$ to control the length of the second epoch as a proportion of the third epoch by setting $\tau_{23}=\tau_1\theta$ .
As all operators in the $K$-representation commute, the only choice available in this case is to take
\beqn
P'_K=e^{-\lambda}\exp\left[\tau_1K^{(1)}+\tau_1(1\!-\!\theta)K^{(2)}+\left(\tau_{12}+\tau_1(1\!-\!\theta)\right)K^{(3)}+\tau_{12}K^{(12)}+\tau_{1}\theta K^{(23)}\right]\Ket{\delta^2 \pi}.\nonumber
\eqn
This is exactly consistent with the generalizations given in \citet{bryant2005,bryant2009}.

For the $\mathfrak{L}$-representation we do \emph{not} have commutivity of the operators $\mathcal{R}_2,\mathcal{R}_3,\mathcal{R}_{12}$ with the new operator $\mathcal{R}_{23}$, thus we need to proceed more carefully as there is some choice in how the extra edge is introduced.
Using the diagram and its three epochs as a guide, we take
\beqn
P'_{\mathfrak{L}}=\exp\left[\tau_1(1\!-\!\theta)\left(\mathcal{R}_1+\mathcal{R}_2+\mathcal{R}_3\right)\right]\cdot\exp\left[\tau_{1}\theta\left(\mathcal{R}_1+\mathcal{R}_{23}\right)\right]\cdot\exp\left[\tau_{12}\left(\mathcal{R}_3+\mathcal{R}_{12}\right)\right]\Ket{\delta^2 \pi}.\nonumber
\eqn
For clarity of comparison, we write the $K$-representation in epoch form:
\beqn
P'_K=e^{-\lambda}\exp\left[\tau_1(1\!-\!\theta)\left(K^{(1)}+K^{(2)}+K^{(3)}\right)\right]\cdot&\exp\left[\tau_{1}\theta\left(K^{(1)}+K^{(23)}\right)\right]\\
&\cdot\exp\left[\tau_{12}\left(K^{(3)}+K^{(12)}\right)\right]\Ket{\delta^2 \pi}.\nonumber
\eqn

Now consider the state of the probability distribution at the beginning of epoch 2. 
As we are dealing with the binary symmetric model, it is clear that the probability of the any state $\Ket{ijk}$ is invariant to permutation of the states $0\rightleftharpoons1$.
Also, the structure of the tree up to the start of epoch 2 implies that the probability of any state of the form $\Ket{ijk}$ where $i\neq j$ is of probability zero.
Thus we can assume at the start of epoch 2 that the distribution is of the form
\beqn
P=(1-q)\fra{1}{2}\left(\Ket{000}+\Ket{111}\right)+q\fra{1}{2}\left(\Ket{001}+\Ket{110}\right),\nonumber
\eqn
where, because we are considering a continuous-time model, $q\in\left[0,\fra{1}{2}\right)$.

Considering the definitions
\beqn
\mathcal{R}_{23}=1\otimes\left(L_\alpha\otimes L_\alpha+L_\alpha\otimes 1+1\otimes L_\alpha+L_\beta\otimes L_\beta+L_\beta\otimes 1+1\otimes L_\beta\right),\nonumber
\eqn
and
\beqn
K^{(23)}=1\otimes K\otimes K=\mathcal{R}_{23}+1\otimes\left(L_\alpha\otimes L_\beta+L_\beta\otimes L_\alpha\right),\nonumber
\eqn
it follows that transition rates between the four existing states in the two cases are given by the two graphs in Figure~\ref{fig:ratediag}, where all transition rates are equal.
The crucial thing to note is that $\mathcal{R}_{23}$ ``corrects'' patterns that are inconsistent with the split $1|23$, whereas $K^{(23)}$ simply \emph{premutes} these two states.

\begin{figure}
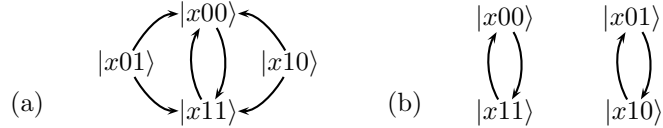

\centering
\label{fig:ratediag}
\begin{tabular}{cccc}
(a)\hspace{2em}$
\psmatrix[colsep=.3cm,rowsep=.3cm]
&\Ket{x00}&\\
\Ket{x01}&&\Ket{x10}\\
&\Ket{x11}&
\ncarc[arcangle=30]{<-}{3,2}{2,1}
\ncarc[arcangle=30]{->}{2,1}{1,2}
\ncarc[arcangle=-30]{<-}{3,2}{1,2}
\ncarc[arcangle=-30]{<-}{1,2}{3,2}
\ncarc[arcangle=-30]{<-}{3,2}{2,3}
\ncarc[arcangle=30]{<-}{1,2}{2,3}
\endpsmatrix
$
&\hspace{4em}&
(b)\hspace{2em}$
\psmatrix[colsep=.9cm,rowsep=.9cm]
\Ket{x00}&\Ket{x01}\\
\Ket{x11}&\Ket{x10}
\ncarc[arcangle=30]{->}{1,1}{2,1}
\ncarc[arcangle=30]{->}{2,1}{1,1}
\ncarc[arcangle=30]{->}{1,2}{2,2}
\ncarc[arcangle=30]{->}{2,2}{1,2}
\endpsmatrix
$
\end{tabular}
\caption{Transition rates for states $\Ket{xyz}$ under (a.) $\mathcal{R}_{23}$, and (b.) $K^{({23})}-1\otimes 1\otimes 1$.}
\end{figure}

It is now clear the there is quite a marked difference between the $\mathfrak{L}$- and $K$-representations. 
The $\mathfrak{L}$ representations introduces a natural notion of the ``coming together'' of taxa. 
In fact it is easy to see directly from the diagram that in the limit of extension of the edge $\tau_{23}$ to infinity that the probability distribution will converge to 
\beqn
P&=\left(\left(1-\fra{1}{2}q\right)\fra{1}{2}+\fra{1}{4}q\right)\left(\Ket{000}+\Ket{111}\right)+\fra{1}{2}q\fra{1}{2}\left(\Ket{011}+\Ket{100}\right)\\
&=\left(1-\fra{1}{2}q\right)\fra{1}{2}\left(\Ket{000}+\Ket{111}\right)+\fra{1}{2}q\fra{1}{2}\left(\Ket{011}+\Ket{100}\right),\nonumber
\eqn
which is consistent with the probability distribution that would arise under a tree where taxa 1 has diverged from 2 and 3, but there has been zero divergence of taxa 2 and 3 themselves.
This behaviour is of the course the reasoning behind the way we have chosen to draw our diagram Figure~\ref{fig:3taxa}~(b).

The $K$-representation cannot achieve this type of convergence, with its limiting state being
\beqn
(1-q)\fra{1}{4}\left(\Ket{000}+\Ket{011}+\Ket{111}+\Ket{100}\right)+q\fra{1}{4}\left(\Ket{001}+\Ket{010}+\Ket{110}+\Ket{101}\right).\nonumber
\eqn

\section{Conclusion}
In this paper we have shown how to express the two state continuous-time general Markov model on trees in such a way that allows extension to arbitrary split systems and even more general network models.
By reviewing the lemmas in Section~\ref{sec:prelim} it is clear that the results extend easily to the general Markov model with more character states.
However, we defer confirmation of this observation to future work.

In Section~\ref{sec:incon} we showed that our discussion gives rise to phylogenetic models for the general Markov model that are identical to previous approaches only on \emph{trees}, and in Section~\ref{sec:epochs} we give a simple example that shows how our approach allows for convergent evolution of previously divergent lineages (a structural property that was previously unobtainable).

Besides its theoretical interest, we expect that the ability to model convergent evolution in this way will have a significant application where it is known that particular datasets exhibit non-treelike behaviour due to population genetic properties such as incomplete lineage sorting and other effects that confound strictly treelike models (as discussed in Section~\ref{sec:intro}). 
We suspect that exploration of the relation between the network models that arise in our discussion and simple models of population genetics is likely to yield significant additional insight.   

Finally, comparison of our network models to the distribution space generated by the so-called ``mixture models'' \citep{pagel2004} is also in need of investigation.
For example, comparison of a mixture of the trees $12|3$ and $1|23$ to our network example given in Section~\ref{sec:epochs} should yield significant theoretical insight into the biological meaning and plausibility of these differing model classes.  
Careful scientific thought is required to tease out what biological processes are explicitly (or implicitly) being modelled by either of these approaches.

Of course these exciting possibilities must be tempered by analysis of the \emph{identifiability} (or otherwise) of the models that arise by taking more general networks. 
Establishing identifiability is not only essential from a statistical inference point of view, but, in this case, may lead to natural restrictions of the types of network that can be realistically used for phylogenetic inference.

\bibliographystyle{jtbnew}
\bibliography{masterAB}

\end{document}